\definecolor{ForestGreen}{rgb}{0.1333,0.5451,0.1333}
\newtheorem{lemma}{Lemma}
\newtheorem{theorem}{Theorem} 
\newtheorem{proof}{Proof}
\newcommand{\One}[1]{\ensuremath{{\mathbf 1}\left(#1\right)}}
\newcommand{\fennel}{{\sc Fennel}\xspace}
\newcommand{\metis}{\textsc{Metis}\xspace}
\newcommand{\Prob}[1]{\ensuremath{{\bf{Pr}}\left[{#1}\right]}}
\newcommand{\Mean}[1]{\ensuremath{{\mathbb E}\left[{#1}\right]}}
\newcommand{\NPhard}{{\ensuremath{\mathbf{NP}}-hard}\xspace}
\newcommand{\whp}{\textit{whp}\xspace}
\newcommand{\Var}[1]{{\mathbb Var}\left[{#1}\right]}
\newcommand{\Egypt}{\textsc{EGyPT}\xspace}
\newcommand{\egypt}{\textsc{EGyPT}\xspace}
\newcommand{\Graphlab}{\textsc{GraphLab}\xspace}
\newcommand{\Pregel}{\textsc{Pregel}\xspace}
\newcommand{\hide}[1]{}
\newcommand{\field}[1]{\mathbb{#1}} 
\newcommand{\spara}[1]{\smallskip\noindent{\bf #1}}
\newcommand{\squishlist}{
 \begin{list}{$\bullet$}
  {  \setlength{\itemsep}{0pt}
     \setlength{\parsep}{3pt}
     \setlength{\topsep}{3pt}
     \setlength{\partopsep}{0pt}
     \setlength{\leftmargin}{2em}
     \setlength{\labelwidth}{1.5em}
     \setlength{\labelsep}{0.5em}
} }
\newcommand{\squishlisttight}{
 \begin{list}{$\bullet$}
  { \setlength{\itemsep}{0pt}
    \setlength{\parsep}{0pt}
    \setlength{\topsep}{0pt}
    \setlength{\partopsep}{0pt}
    \setlength{\leftmargin}{2em}
    \setlength{\labelwidth}{1.5em}
    \setlength{\labelsep}{0.5em}
} }
\newcommand{\squishdesc}{
 \begin{list}{}
  {  \setlength{\itemsep}{0pt}
     \setlength{\parsep}{3pt}
     \setlength{\topsep}{3pt}
     \setlength{\partopsep}{0pt}
     \setlength{\leftmargin}{1em}
     \setlength{\labelwidth}{1.5em}
     \setlength{\labelsep}{0.5em}
} }
\newcommand{\squishend}{
  \end{list}
}
\newtheorem{definition}{Definition} 
\begin{document}

\title{Streaming Graph Partitioning in the Planted Partition Model}

\author{
  Charalampos E. Tsourakakis\\
  Brown University\\
  \texttt{charalampos\_tsourakakis@brown.edu} 
}

\maketitle \sloppy

\begin{abstract}

The sheer increase in the size of graph data has created 
a lot of interest into developing efficient distributed graph processing frameworks. 
Popular existing frameworks such as \Graphlab and \Pregel 
rely on balanced graph partitioning in order to minimize communication  
and achieve work balance.   

In this work we contribute to the recent research line   
of streaming graph partitioning \cite{stantonstreaming,stanton,fennel}  
which computes an approximately balanced $k$-partitioning of the vertex set of a graph using 
a single pass over the graph stream  using degree-based criteria. 
This graph partitioning framework is well tailored to processing large-scale and dynamic graphs.
In this work we introduce the use of higher length walks for streaming graph partitioning
and show that their use incurs a minor computational cost which can significantly 
improve the quality of the graph partition. 
We perform an average case analysis of our algorithm using 
the planted partition model \cite{condon2001algorithms,mcsherry2001spectral}.
We complement the recent results of Stanton \cite{stantonstreaming} by showing that 
our proposed method recovers the true partition with high probability even when the 
gap of the model tends to zero as the size of the graph grows. Furthermore, among the wide 
number of choices for the length of the walks we show that the proposed length is optimal.  
Finally, we conduct experiments which verify the value of the proposed method.
\end{abstract}

\section{Introduction}
\label{sec:intro}
The size of graph data that are required to be processed nowadays is massive.
For instance, the Web graph amounts to at least one trillion of links~\cite{google}
and Facebook in 2012 reported more than 1 billion of users 
and 140 billion of friend connections.
Furthermore, graphs of significantly greater size emerge by post-processing various other  
data such as image and text datasets. 
This sheer increase in the size of graphs has created a lot of interest 
in developing distributed graph processing systems \cite{DBLP:conf/icdm/KangTF09,DBLP:conf/uai/LowGKBGH10,malewicz}, 
in which the graph is distributed across multiple machines.
A key problem towards enabling efficient graph computations in such  systems is the \NPhard problem of 
{\em balanced graph partitioning}.  High-quality partitions ensure low volume of communication and work balance. 

Recently, Stanton and Kliot \cite{stanton}  introduced a streaming graph partitioning model. 
This line of research despite being recent and lacking theoretical understanding 
has already attracted a lot of interest. 
Several existing systems have incorporated this model such as PowerGraph \cite{gonzalez12powergraph}.
The framework of \fennel has been adapted by PowerLyra \cite{powerlyra}, 
which has been included into the most recent \Graphlab version 
\cite{bickson,DBLP:conf/uai/LowGKBGH10} yielding significant speedups for various iterative computations. 
Stanton performed an average case analysis of two streaming algorithms,
and explained their efficiency despite the pessimistic worst case analysis \cite{stantonstreaming}. 
Despite the fact that existing established heuristics such as \metis 
typically outperform streaming algorithms, the latter are well tailored to today's needs 
for processing dynamic graphs and big graph data which do not fit in the main memory. 
They  are computationally cheap and provide partitions of comparable quality. 
For instance, \fennel  on the Twitter network with more than 1.4 billion edges performed 
comparably well with \metis for a wide variety of settings, requiring 40 minutes 
of running time, whereas \metis 8$\frac{1}{2}$ hours.  
 
So far, the work on streaming graph partitioning is based on computing the degrees 
of incoming vertices towards each of the $k$ available machines. Equivalently, 
this can be seen as performing a one step random walk from the incoming vertex. 
A natural idea which is used extensively in the literature of graph 
partitioning \cite{lovasz1990mixing,spielman2008local,zhou2004clustering}
is the use of higher length walks. In this work we introduce this idea 
in the setting of streaming graph partitioning. 
At the same time we maintain the time efficiency of streaming graph partitioning 
algorithms which make them attractive to various 
graph processing systems \cite{gonzalez12powergraph,DBLP:conf/icdm/KangTF09,DBLP:conf/uai/LowGKBGH10, malewicz}. 

{\bf Summary of our contributions}. 
Our contributions can be summarized as follows:
 
\squishlist

\item Our proposed algorithm introduces the idea of using higher 
length  walks for streaming graph partitioning. It   incurs a negligible computational cost
and significantly improves the quality of the partition. 
We perform an average case analysis  on the planted partition model, 
c.f. Section~\ref{subsec:models} for the description
of the model. We complement the recent results of \cite{stantonstreaming} which require that 
the gap $p-q$ of the planted partition model
is constant in order to recover the partition \whp\footnote{An event $\mathcal{E}_n$ 
occurs {\em with high probability}, or \whp\ for brevity, if
$\lim_{n\rightarrow\infty}\Prob{\mathcal{E}_n}=1$.}, by allowing 
gaps $p-q$ which asymptotically tend to 0 as $n$ grows. 

\item Among the wide number of choices for the length of $t$-walks where $t \geq 2$, 
we show that walks of length 2 are optimal as they allow the smallest 
possible gap for which we can guarantee recovery \whp, c.f. Section~\ref{subsec:pathtclassification}. 

\item We evaluate our method on the planted partition model and we provide 
a machine learning application which illustrates the benefits of our method.

\squishend

{\bf Structure of the paper}. The remainder of the paper is organized as follows. 
Section~\ref{sec:related} discusses briefly related work. 
Section~\ref{sec:algorithm} presents the proposed algorithm and its analysis.
Section~\ref{sec:experiments} presents the experimental evaluation. 
Section~\ref{sec:concl} concludes.

\section{Related Work} 
\label{sec:related}
Section~\ref{existing} briefly reviews  existing work. Section~\ref{prelim} 
states the probabilistic tools we use in Section~\ref{sec:algorithm}.
A complete account can be found in \cite{dubhashi2009concentration,mitzenmacher2005probability}.

\subsection{Existing Work} 
\label{existing}

\spara{ Balanced graph partitioning.}  The {\em balanced graph partitioning} problem is a classic \NPhard problem 
of fundamental importance to parallel and distributed computing. 
The input to this problem is an undirected graph $G(V,E)$ and an integer $k \in \field{Z}^+$, the 
output is a partition of the vertex set in $k$ balanced sets such that the number of edges across the clusters
is minimized. We refer to the $k$ sets as {\it clusters} or {\it machines} interchangeably.  
Formally, the balance constraint is defined by the imbalance parameter $\nu$. 
Specifically, the $(k,\nu)$-balanced graph partitioning 
asks to divide the vertices of a graph in $k$ clusters each of size 
at most $\nu \frac{n}{k}$, where $n$ is the number of vertices in $G$. 
The case $k=2,\nu=1$ is equivalent to the \NPhard   minimum bisection problem.
Several approximation algorithms, e.g., \cite{Feige:2002:PAM:586842.586910},
and heuristics, e.g., \cite{kl}, exist for this problem. 
When $\nu=1+\epsilon$ for any desired but fixed $\epsilon>0$ there exists a
$O(\epsilon^{-2} \log^{1.5}{n})$ approximation algorithm \cite{Andreev:2004:BGP:1007912.1007931}. 
When $\nu = 2$ there exists an $O(\sqrt{\log{k}\log{n}})$ approximation algorithm
based on semidefinite programming (SDP) \cite{krauthgamer}.  
Due to the practical importance of $k$-partitioning there exist several heuristics, 
among which \metis \cite{schloegel} stands out for its good performance. 
Survey \cite{christianschulz} summarizes many popular existing methods for the balanced
graph partitioning problem. 

\spara{ Streaming balanced graph partitioning.}  
Despite the large amount of work on the balanced graph partitioning problem, neither state-of-the-art
approximation algorithms nor heuristics  such as \metis are well tailored to the computational restrictions
that the size of today's graphs impose. 

Stanton and Kliot introduced the streaming balanced graph partitioning problem, 
where the graph arrives in the stream and decisions about the partition need to be taken with  on the fly quickly~\cite{stanton}. 
Specifically, the vertices of the graph arrive in a stream with the set of edges incident to them. 
When a vertex arrives, a partitioner decides  where to place the vertex ``on the fly'', using limited computational 
resources (time and space). 
A vertex is never relocated after it becomes assigned to one of the $k$ machines.  
\hide{ A realistic assumption that can be used in real-world streaming graph partitioners is the existence
of a small-sized buffer. Stanton and Kliot evaluate partitioners with or without buffers.}
In \cite{fennel} well-performing decision strategies for the partitioner were introduced
inspired by a generalization of optimal quasi-cliques \cite{tsourakakis2013denser} 
to $k$-partitioning.
Both \cite{stanton} and \cite{fennel} can be   adapted to edge streams.
Nishimura and Ugander \cite{nishimura2013restreaming} consider a realistic variation 
of the original problem where the constraint of a single pass over the graph stream 
is relaxed to allowing multiple passes. 
Stanton showed that streaming graph partitioning algorithms with a single pass 
under an adversarial stream order cannot approximate the optimal cut size within
$o(n)$. The same claim holds also for random stream orders \cite{stantonstreaming}.
Finally,  Stanton \cite{stantonstreaming} analyzes two variants of well performing algorithms 
from~\cite{stanton} on random graphs. Specifically, Stanton proves that if the graph 
$G$ is sampled according to  the planted partition model, then the two algorithms
despite their similarity can perform differently and that 
one of the two can recover the true partition \whp, assuming  that 
inter-, intra-cluster edge probabilities are constant, and their gap is a large constant.

\spara{ Planted partition model.} Jerrum and Sorkin \cite{jerrum1998metropolis} studied the planted bisection model, 
a random undirected graph with an even number 
of vertices. According to this model, each half of the vertices is assigned to one of two 
clusters. Then, the probability of an edge $(i,j)$ is $p$ if $i,j$ have the same 
color, otherwise $q<p$. We will refer to $p,q$ as the intra- and inter-cluster probabilities. 
Their difference $p-q$ will be referred to as the {\em gap} of the model. 
Condon and Karp \cite{condon2001algorithms} studied the generalization of the planted bisection problem where instead
of having only two clusters, there exist $k$ clusters of equal size.  
The probability of an edge is the same as in the planted bisection problem. 
They show that the hidden partition can be recovered \whp if the gap satisfies $p-q \geq n^{-1/2+\epsilon}$. 
McSherry \cite{mcsherry2001spectral} presents a spectral algorithm that recovers the hidden partition 
in a random graph \whp if $p-q = \Omega(n^{-1/2+\epsilon})$.  Recently, Van Vu gave an alternative 
algorithm \cite{DBLP:journals/corr/Vu14} to obtain McSherry's result. 
Zhou and Woodruff \cite{zhou2004clustering} showed that
if  $p=\Theta(1), q=\Theta(1), p-q=\Omega(n^{-1/4})$, then a simple algorithm based on squaring 
the adjacency matrix of the graph recovers  the hidden partition \whp.
 
\spara{Random walks.} The idea of using walks of length greater than one \cite{zhou2004clustering} 
is common in the general setting of graph partitioning. 
Lov\'{a}sz and Simonovits \cite{lovasz1990mixing} show
that random walks of length $O(\frac{1}{\phi})$ can be used 
to compute a cut with sparsity at most $\tilde{O}(\sqrt \phi)$ 
if the sparsest cut has conductance $\phi$. 
Later, Spielman and Teng \cite{spielman2008local} 
provided a local graph partitioning 
algorithm which implements efficiently the Lov\'{a}sz-Simonovits idea. 

\hide{
\spara{Clustering data streams.}
A general form of the clustering problem is the following: given an integer $k$ and a collection of 
$n$ points in a metric space, find $k$ cluster centers so that each point 
is assigned to one center and an objective which is a function of the sum 
of distances over all points from their assigned centers is minimized. 
The streaming setting has received a lot of attention due to the abundance 
of data in numerous data mining, pattern recognition and machine learning applications 
\cite{ailon2009streaming}.	
}

\subsection{Theoretical Preliminaries} 
\label{prelim}


A useful lemma that we use extensively is Boole's inequality, also known as the union bound. 

\begin{lemma}[Union bound] 
Let $A_1,\ldots,A_n$ be events in a probability space $\Omega$, then 
$$ \Prob{\cup_{i=1}^n A_i }  \leq \sum_{i=1}^n \Prob{A_i}.$$ 
\end{lemma}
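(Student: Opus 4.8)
The plan is to prove the inequality by the standard \emph{disjointification} trick, which reduces a bound on a union to an exact computation on a disjoint family combined with monotonicity. Concretely, I would set $B_1 = A_1$ and, for $2 \leq i \leq n$, define the ``new mass'' sets $B_i = A_i \setminus \bigcup_{j=1}^{i-1} A_j$. Because a probability space is built on a $\sigma$-algebra, each $B_i$ is again an event (it is obtained from the $A_j$ by finitely many complements, intersections, and unions), so $\Prob{B_i}$ is well defined; this measurability check is the only point that requires any care.

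Next I would verify three elementary set-theoretic facts. First, the $B_i$ are pairwise disjoint, since by construction $B_i$ excludes everything already contained in an earlier $A_j$. Second, $\bigcup_{i=1}^n B_i = \bigcup_{i=1}^n A_i$, because deleting from $A_i$ only elements that are already covered by earlier sets leaves the overall union unchanged. Third, $B_i \subseteq A_i$ for every $i$, which is immediate from the definition.

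With these in hand the estimate follows in two lines. By finite additivity applied to the disjoint family, together with the second identity,
$$\Prob{\bigcup_{i=1}^n A_i} = \Prob{\bigcup_{i=1}^n B_i} = \sum_{i=1}^n \Prob{B_i}.$$
Monotonicity of the probability measure together with the inclusion $B_i \subseteq A_i$ then gives $\Prob{B_i} \leq \Prob{A_i}$ term by term, and summing yields $\sum_{i=1}^n \Prob{B_i} \leq \sum_{i=1}^n \Prob{A_i}$, which is the claim.

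An equally clean alternative is induction on $n$: the base case $n=1$ is trivial, and the inductive step uses the two-set identity $\Prob{A \cup B} = \Prob{A} + \Prob{B} - \Prob{A \cap B} \leq \Prob{A} + \Prob{B}$ applied with $A = \bigcup_{i=1}^{n-1} A_i$ and $B = A_n$. Since the statement is a textbook fact, there is no genuine obstacle here; the one thing not to gloss over is that every set appearing in the argument is measurable, which is guaranteed by closure of the $\sigma$-algebra under finite Boolean operations.
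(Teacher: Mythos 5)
Your proof is correct and complete: the disjointification argument (setting $B_i = A_i \setminus \bigcup_{j<i} A_j$, using finite additivity on the disjoint family and monotonicity term by term) is the standard derivation of Boole's inequality, and your measurability remark and the inductive alternative are both sound. The paper itself states this lemma without proof, treating it as a textbook preliminary, so there is no authorial argument to compare against; your write-up fills that gap correctly.
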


The following theorem is due to Markov and its use is known as the first moment method. 

\begin{theorem}[First Moment Method] 
\label{markov}
Let $X$ be a non-negative random variable with finite expected value $\Mean{X}$. Then, for any 
real number $t > 0$, 

$$ \Prob{ X \geq t } \leq \frac{\Mean{X}}{t}.$$
\end{theorem}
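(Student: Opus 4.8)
The plan is to prove this via the standard indicator-function argument, which reduces the inequality to a pointwise comparison of random variables and then exploits the monotonicity and linearity of expectation.

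First I would introduce the indicator random variable $\One{X \geq t}$ and establish the pointwise bound
$$ t \cdot \One{X \geq t} \leq X $$
on the entire sample space. To verify it I would split into two cases. On the event $\{X \geq t\}$ the indicator equals one, so the left-hand side is exactly $t$, which is at most $X$ by the defining condition of the event. On the complementary event $\{X < t\}$ the indicator vanishes, so the left-hand side is zero, which is at most $X$ precisely because $X$ is non-negative by hypothesis. This case analysis is the crux of the argument, and it is where the non-negativity assumption is indispensable.

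Next I would take expectations of both sides. Since $t$ is a deterministic constant and $\Mean{\One{X \geq t}} = \Prob{X \geq t}$, linearity together with the monotonicity of expectation (which preserves the pointwise inequality above) gives
$$ t \cdot \Prob{X \geq t} = \Mean{t \cdot \One{X \geq t}} \leq \Mean{X}. $$
The assumed finiteness of $\Mean{X}$ guarantees the right-hand side is well-defined, and dividing through by $t > 0$ yields the claimed bound $\Prob{X \geq t} \leq \Mean{X}/t$.

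The proof is entirely elementary and I do not anticipate a genuine obstacle; the only point requiring care is the pointwise inequality, since the bound would fail on the event $\{X < t\}$ if $X$ were permitted to take negative values there. The hypothesis $X \geq 0$ is exactly what rules this out, so I would be careful to invoke it explicitly in the second case rather than treat it as automatic.
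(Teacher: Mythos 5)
Your proof is correct: the pointwise bound $t\cdot\One{X \geq t} \leq X$, justified by the two-case analysis with non-negativity handling the event $\{X < t\}$, followed by taking expectations and dividing by $t>0$, is the standard and complete argument for Markov's inequality. The paper states this result in its preliminaries without proof (deferring to standard references), so there is no alternative route to compare against; your argument is exactly the canonical one and correctly identifies non-negativity as the indispensable hypothesis.
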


The following theorem is due to Chebyshev and its use is known as the second moment method. 

\begin{theorem}[Second Moment Method]
\label{chebyshev} 
Let $X$ be a random variable with finite expected value $\Mean{X}$ and finite non-zero variance $\Var{X}$. Then, for any 
real number $t > 0$, 

$$ \Prob{ |X-\Mean{X}| \geq t } \leq \frac{\Var{X}}{t^2}.$$
\end{theorem}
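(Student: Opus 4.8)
The plan is to reduce the claim to the First Moment Method (Theorem~\ref{markov}) by passing to the squared deviation, which is the standard route and essentially the only natural one. First I would introduce the auxiliary random variable $Y = (X - \Mean{X})^2$. By construction $Y \geq 0$, so it is a legitimate target for Markov's inequality, and its expectation is exactly the variance by definition: $\Mean{Y} = \Mean{(X-\Mean{X})^2} = \Var{X}$, which is finite by hypothesis.

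The second step is to observe that the event whose probability we wish to bound coincides with a tail event for $Y$. Since $t > 0$ and $|X - \Mean{X}| \geq 0$, squaring is monotone on the relevant range, so $|X - \Mean{X}| \geq t$ holds if and only if $(X - \Mean{X})^2 \geq t^2$; that is, $\{|X - \Mean{X}| \geq t\} = \{Y \geq t^2\}$ as events, and hence the two probabilities are equal.

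Finally I would apply Theorem~\ref{markov} to the non-negative variable $Y$ with threshold $t^2 > 0$, obtaining $\Prob{Y \geq t^2} \leq \Mean{Y}/t^2$. Substituting $\Mean{Y} = \Var{X}$ and rewriting the left-hand side using the event identity from the previous step yields $\Prob{|X - \Mean{X}| \geq t} \leq \Var{X}/t^2$, which is the desired bound.

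I do not expect any genuine obstacle here: the entire content of the argument is the single modeling decision to apply the first moment bound to the centered, squared variable $Y$ rather than to $X$ directly. The only points that merit a word of justification are that $Y$ is non-negative, so that Theorem~\ref{markov} is applicable, and that $t^2 > 0$, so that the threshold is admissible; both follow immediately from $t > 0$ together with the standing assumptions that $\Mean{X}$ and $\Var{X}$ are finite.
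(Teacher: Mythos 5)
Your proof is correct: the paper states this result without proof as a standard preliminary (deferring to its cited references), and your derivation—applying Theorem~\ref{markov} to the non-negative variable $Y=(X-\Mean{X})^2$ with threshold $t^2$ and using the event identity $\{|X-\Mean{X}|\geq t\}=\{Y\geq t^2\}$—is exactly the canonical argument. Nothing is missing.
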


\noindent Finally, we use the following Chernoff bounds for independent and negatively correlated
random variables. 

\begin{theorem}[Multiplicative Chernoff Bound]
\label{thrm:chernoff}
Let $ X=\sum_{i=1}^n X_i$ where  $ X_1,\ldots,X_n$ are independent random variables taking values in $ [0,1]$. 
Also, let $ \delta \in [0,1]$. Then, 

$$ Pr( X \leq (1-\delta)E(X) ) \leq e^{-\delta^2E(X)/2}.$$ 
\end{theorem}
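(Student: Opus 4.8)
The plan is to use the exponential moment method (Chernoff's trick) together with the first moment method already recorded as Theorem \ref{markov}. Write $\mu = E(X) = \sum_{i=1}^n E(X_i)$. Since the map $x \mapsto e^{-sx}$ is strictly decreasing for any fixed $s > 0$, the event $\{X \leq (1-\delta)\mu\}$ coincides with $\{e^{-sX} \geq e^{-s(1-\delta)\mu}\}$. Applying Markov's inequality to the nonnegative random variable $e^{-sX}$ yields
$$\Pr(X \leq (1-\delta)\mu) \leq e^{s(1-\delta)\mu}\, E[e^{-sX}],$$
and by independence $E[e^{-sX}] = \prod_{i=1}^n E[e^{-sX_i}]$. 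Everything then reduces to bounding each factor and optimizing over $s$.

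First I would control each factor. Because $x \mapsto e^{-sx}$ is convex, on the interval $[0,1]$ it lies below the chord joining $(0,1)$ and $(1,e^{-s})$, so $e^{-sx} \leq 1 - x(1-e^{-s})$ for every $x \in [0,1]$; this is the one place where I use the full hypothesis $X_i \in [0,1]$ rather than merely $X_i \in \{0,1\}$. Taking expectations and using $1 + y \leq e^y$ gives $E[e^{-sX_i}] \leq 1 - E(X_i)(1-e^{-s}) \leq \exp(-E(X_i)(1-e^{-s}))$, and multiplying over $i$ produces $E[e^{-sX}] \leq \exp(-\mu(1-e^{-s}))$. Combining with the displayed bound, I obtain $\Pr(X \leq (1-\delta)\mu) \leq \exp(\mu[s(1-\delta) - (1-e^{-s})])$ for every $s > 0$.

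Next I would optimize the exponent $g(s) = s(1-\delta) - 1 + e^{-s}$ over $s > 0$. Setting $g'(s) = (1-\delta) - e^{-s} = 0$ gives the minimizer $s^\star = \ln\frac{1}{1-\delta}$, which is positive for $\delta \in (0,1)$, and substituting $e^{-s^\star} = 1-\delta$ leaves the classical bound
$$\Pr(X \leq (1-\delta)\mu) \leq \exp\!\big(\mu[-\delta - (1-\delta)\ln(1-\delta)]\big).$$
The last step, and the only genuinely delicate calculation, is to show $-\delta - (1-\delta)\ln(1-\delta) \leq -\delta^2/2$ for all $\delta \in [0,1]$, which converts the awkward transcendental exponent into the clean form in the statement. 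I would verify this by setting $h(\delta) = -\delta - (1-\delta)\ln(1-\delta) + \delta^2/2$, checking $h(0) = 0$ and $h'(0) = \ln(1-\delta) + \delta \big|_{\delta=0} = 0$, and computing $h''(\delta) = -\delta/(1-\delta) \leq 0$; concavity together with the two vanishing values at the origin forces $h(\delta) \leq 0$ on $[0,1)$, with the endpoint $\delta = 1$ following by continuity (there the left side tends to $-1 \leq -\tfrac12$). This yields the stated inequality. The main obstacle is really just this final elementary but slightly fiddly inequality; the exponential-moment skeleton is routine once the per-variable convexity bound is in hand.
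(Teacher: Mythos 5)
Your proof is correct and complete: the exponential-moment (Chernoff) argument, the per-variable chord bound $e^{-sx}\leq 1-x(1-e^{-s})$ on $[0,1]$, the optimization at $s^{\star}=\ln\frac{1}{1-\delta}$, and the concavity argument for $-\delta-(1-\delta)\ln(1-\delta)\leq -\delta^{2}/2$ are all sound. The paper states this theorem as a standard preliminary without proof (deferring to its cited references on concentration inequalities), and your derivation is exactly the classical one found there, so there is nothing to reconcile.
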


\begin{definition}[Negatively correlated random variables]
Let $ X_1,\ldots,X_n$ be random binary variables. We say that they are negatively 
correlated if and only if for all sets $ I \subseteq [n]$ the following inequalities 
are true: $$ Pr(\forall i \in I: X_i=0) \leq \prod_{i \in I} Pr(X_i=0)$$ 
\noindent and  
$$ Pr(\forall i \in I: X_i=1) \leq \prod_{i \in I} Pr(X_i=1).$$ 
\end{definition}

\begin{theorem}
\label{thrm:chernoffneg}
Let $ X=\sum_{i=1}^n X_i$ where  $ X_1,\ldots,X_n$ are negatively correlated binary random variables.
Also, let $ \delta \in [0,1]$. Then, 

$$ Pr( X \leq (1-\delta)E(X) ) \leq e^{-\delta^2E(X)/2}.$$ 
\end{theorem}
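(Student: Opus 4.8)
The plan is to run the standard exponential-moment (Chernoff) argument, but to replace the independence-based factorization of the moment generating function by a one-sided factorization that uses only the negative-correlation hypothesis. Write $\mu = E(X)$ and $p_i = Pr(X_i = 1)$. First I would fix a parameter $t > 0$ and apply the First Moment Method (Theorem~\ref{markov}) to the nonnegative variable $e^{-tX}$, noting that $\{X \leq (1-\delta)\mu\}$ is the same event as $\{e^{-tX} \geq e^{-t(1-\delta)\mu}\}$; this yields
\[
 Pr\big(X \leq (1-\delta)\mu\big) \leq e^{t(1-\delta)\mu}\, E\big(e^{-tX}\big).
\]
Everything then reduces to upper bounding $E(e^{-tX}) = E\big(\prod_{i} e^{-tX_i}\big)$.

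The hard part will be this factorization, since the $X_i$ are only negatively correlated and not independent, so $E(\prod_i e^{-tX_i})$ need not equal $\prod_i E(e^{-tX_i})$. The key trick I would use is to expand each factor in terms of the complementary indicator $\bar X_i = 1 - X_i$ rather than $X_i$. Setting $b = e^{-t}$ and $d = 1 - e^{-t}$, both in $(0,1)$, we have $e^{-tX_i} = b + d\,\bar X_i$, and multiplying out,
\[
 E\big(e^{-tX}\big) = \sum_{I \subseteq [n]} b^{\,n-|I|} d^{\,|I|}\, Pr\big(\forall i \in I : X_i = 0\big).
\]
Crucially every coefficient $b^{\,n-|I|}d^{\,|I|}$ is nonnegative, so I can apply the first inequality of the negative-correlation definition, $Pr(\forall i \in I : X_i = 0) \leq \prod_{i \in I}(1-p_i)$, termwise. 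Re-assembling the sum gives $E(e^{-tX}) \leq \prod_i \big(b + d(1-p_i)\big) = \prod_i E(e^{-tX_i})$. This is exactly the point where the definition's ``$X_i = 0$'' condition earns its keep: expanding in $\bar X_i$ makes all signs positive, so negative correlation pushes the joint probabilities in the direction we need (had I expanded in $X_i$ the coefficients would alternate in sign and the hypothesis would no longer apply termwise).

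Once the one-sided factorization $E(e^{-tX}) \leq \prod_i E(e^{-tX_i})$ is in hand, the remainder is identical to the independent-variable proof of Theorem~\ref{thrm:chernoff}. I would bound each factor using $E(e^{-tX_i}) = 1 - p_i(1 - e^{-t}) \leq \exp\big(-p_i(1-e^{-t})\big)$ via $1 + x \leq e^x$, so that $\prod_i E(e^{-tX_i}) \leq \exp\big(-(1-e^{-t})\mu\big)$ and hence $Pr(X \leq (1-\delta)\mu) \leq \exp\big(\mu[\,t(1-\delta) - 1 + e^{-t}\,]\big)$. Finally I would optimize over $t$ by taking $t = -\ln(1-\delta) > 0$, which reduces the bound to $\big(e^{-\delta}(1-\delta)^{-(1-\delta)}\big)^{\mu}$, and close with the elementary inequality $-(1-\delta)\ln(1-\delta) - \delta \leq -\delta^2/2$, valid for $\delta \in [0,1]$, to obtain $Pr(X \leq (1-\delta)\mu) \leq e^{-\delta^2 \mu/2}$, as claimed.
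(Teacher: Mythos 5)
Your proof is correct. Note that the paper itself gives no proof of this theorem: it is stated in the preliminaries and deferred to the cited references (\cite{dubhashi2009concentration,mitzenmacher2005probability}), so there is no in-paper argument to compare against. What you wrote is the standard proof from that literature, and the one genuinely non-routine step --- expanding each factor as $e^{-tX_i} = e^{-t} + (1-e^{-t})\bar X_i$ so that every coefficient in the multilinear expansion is nonnegative and the negative-correlation inequality for the events $\{\forall i \in I: X_i = 0\}$ can be applied term by term --- is handled correctly; the rest (bounding $1 - p_i(1-e^{-t}) \leq e^{-p_i(1-e^{-t})}$, optimizing $t = -\ln(1-\delta)$, and the elementary inequality $-(1-\delta)\ln(1-\delta) - \delta \leq -\delta^2/2$) is the usual lower-tail Chernoff calculation. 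The only cosmetic gap is the boundary case $\delta = 1$, where your choice of $t$ is undefined; there the claim follows directly from $\Prob{\forall i: X_i = 0} \leq \prod_i(1-p_i) \leq e^{-E(X)}$, or by continuity of the intermediate bound.
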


\section{Proposed Algorithm}
\label{sec:algorithm}
Section~\ref{subsec:notation} introduces our notation
and Section~\ref{subsec:models} presents in detail the random graph model
we analyze. Section~\ref{subsec:simplelemmas} provides two 
useful lemmas used in Sections~\ref{subsec:path2classification}  and~\ref{subsec:pathtclassification}. 
Section~\ref{subsec:path2classification} shows an efficient way to recover the true partition of the planted
partition model \whp\ using walks of length $t=2$ even when $p-q=o(1)$, i.e., the gap 
is asymptotically equal to 0. 
Section~\ref{subsec:pathtclassification} shows that when the length of the walk $t$  
is set to 2, then we obtain the smallest possible {\em gap} $p-q$ for which we can 
guarantee recovery of the partition \whp. 
We do not try to optimize constants in our proofs, since we are interested in asymptotics.
Our proofs use the elementary inequalities $1-p \leq e^{-p}$, ${n \choose k} \leq \Big( \frac{en}{k}\Big)^k$
and the probabilistic tools presented in Section~\ref{prelim}.

\subsection{Notation}
\label{subsec:notation}

Let $G(V,E)$ be a simple undirected graph, where $|V|=n, |E|=m$. 
We define a \emph{vertex partition}  ${\mathcal P}=(S_1,\ldots,S_k)$ as a family 
of pairwise disjoint vertex sets whose union is $V$. 
Each set $S_i$ is assigned to one of $k$ machines, $i=1,\ldots,k$. 
We refer to each $S_i$ as a {\it cluster} or {\it machine}.
Throughout this work, we assume $k=\Theta(1)$. Let ${\partial e(\mathcal{P})}$ be the set of edges that cross partition boundaries, 
i.e., ${\partial e(\mathcal{P})} = \cup_{i=1}^k e(S_i,V\setminus S_i)$. 
The fraction of edges cut $\lambda$ is defined as  $ \lambda = \frac{|{\partial e(\mathcal{P})}|}{m}$.
The imbalance factor or  normalized maximum load $\rho$ is defined as 
$\rho  =  \max\limits_{1 \leq i \leq k}  \frac{|S_i|}{\frac{n}{k}}$. 
Notice that $\rho$ satisfies the double inequality $1 \leq \rho \leq k$.
When $\rho=1$ we have a perfectly balanced partition. At the other extreme, when $\rho=k$ 
all the vertices are placed in one cluster, leaving $k-1$ clusters empty. 
In practice, there is a constraint $\rho \leq \nu$ where $\nu$ 
is a value imposed by application-specific restrictions. Typically, $\nu$ is close to 1. 
\hide{ Let $B$ denote the size of the buffer. We will use the terms buffer size and seed size interchangeably. }
We omit floor and ceiling notation for simplicity, our results remain valid. 

\subsection{Planted Partition Model} 
\label{subsec:models} 

The model $G(n,\Psi,P)$  is a generalization of the classic Erd\"os-R\'enyi graph \cite{bollobas}. 
The first parameter $n$ is the number of vertices. 
The second parameter of the model is the function $\Psi:[n] \rightarrow [k]$ which maps each vertex
to one of $k$ clusters. 
We  refer to $C_i=\Psi^{-1}(i)$ as the $i$-th cluster, $i=1,\ldots,k$.  
The third parameter $P$ is a $k\times k$ matrix such that $0 \leq P_{ij} \leq 1$ for all $i,j=1,\ldots,k$
which specifies the probability distribution over the edge set. 
Specifically, a graph $G \sim G(n,\Psi,P)$ is generated by adding an edge with probability 
$P(\Psi(u),\Psi(v))$ between  each pair of vertices $\{u,v\}$. 
Notice, that when all the entries of $P$ are  equal to $p$, then 
$G(n,\Psi, P)$ is equivalent to the $G(n,p)$ model.
In this work we are interested into graphs that exhibit clustering. The planted partition model we 
analyze is the same as in \cite{condon2001algorithms,mcsherry2001spectral,stantonstreaming,zhou2004clustering}: 
$P_{ij} = p \One{i=j} +q \One{ (i\neq j)}$ for all $i,j=1,\ldots,k$. 
We assume that $p>q=\Theta(1)$. For simplicity, we refer to this version of $G(n,\Psi,P)$
as $G(n,k,p,q)$. 

\subsection{Results and useful lemmas}
\label{subsec:simplelemmas}

We prove two simple lemmas that we use in the analysis of our algorithms. 
We refer to the vertex that arrives exactly after $i-1$ vertices as the $i$-th vertex. 
The first lemma states that after a super-logarithmic number 
of vertices, we have seen a fraction of $\frac{1}{2k}$ 
vertices from each cluster of the planted partition \whp. 

\begin{lemma}
\label{lem:lem1}  
For all  $i \geq D \log^{1+\delta} (n)$ where $D,\delta>0$ are any positive constants,  
there exist at least $\frac{i}{2k}$ vertices that have already arrived from 
each cluster $j$ \whp, $j=1,\ldots,k$.
\end{lemma}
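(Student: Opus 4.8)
The plan is to show that each individual cluster is well represented among the first $i$ arrivals via a single Chernoff estimate, and then to close the statement with a union bound over clusters and over all admissible values of $i$. Throughout I assume the operative model is a uniformly random stream order (equivalently, a uniformly random balanced labelling $\Psi$), which is necessary since an adversarial order could defer an entire cluster to the end. Under this model the first $i$ arrivals form a uniformly random $i$-subset of $V$. Fixing a cluster $j$ and a value $i \ge D\log^{1+\delta}(n)$, and writing $X_j = \sum_{v \in C_j} \One{v \text{ is among the first } i \text{ arrivals}}$, symmetry and linearity of expectation give $\Mean{X_j} = i \cdot \frac{|C_j|}{n} = \frac{i}{k}$, using that the $k$ clusters each have size $n/k$.

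The indicators defining $X_j$ are \emph{not} independent, since exactly $i$ of the $n$ vertices land in the prefix; but sampling a fixed-size subset without replacement makes them negatively correlated in the sense of the definition preceding Theorem~\ref{thrm:chernoffneg}. I would therefore invoke that theorem with deviation parameter $\tfrac{1}{2}$, obtaining
\[
  \Prob{X_j \le \tfrac{1}{2}\Mean{X_j}} = \Prob{X_j \le \tfrac{i}{2k}} \le e^{-\Mean{X_j}/8} = e^{-i/(8k)} \le e^{-D\log^{1+\delta}(n)/(8k)},
\]
where the final inequality uses $i \ge D\log^{1+\delta}(n)$ together with $k=\Theta(1)$.

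Finally, I would union-bound these bad events over the $k=\Theta(1)$ clusters and over all integers $i$ in the range $[D\log^{1+\delta}(n),n]$, of which there are at most $n$, giving
\[
  \Prob{\exists\, i \ge D\log^{1+\delta}(n),\ \exists\, j : X_j < \tfrac{i}{2k}} \le nk\, e^{-D\log^{1+\delta}(n)/(8k)} = k\, e^{\log n - D\log^{1+\delta}(n)/(8k)}.
\]
Because $\delta > 0$, the term $\log^{1+\delta}(n)$ eventually dominates $\log n$ for \emph{every} fixed $D>0$, so the exponent tends to $-\infty$ and the right-hand side tends to $0$, establishing the claim. I expect the main obstacle to be conceptual rather than computational: one must recognize that the fixed-prefix indicators are dependent and justify their negative correlation so that Theorem~\ref{thrm:chernoffneg} applies, and one must carry the union bound over all $\Theta(n)$ values of $i$. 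It is precisely this last step that forces the super-logarithmic threshold $\log^{1+\delta}(n)$ with $\delta>0$, rather than a merely logarithmic one, since a $\Theta(\log n)$ threshold would only beat the factor of $n$ for sufficiently large constants $D$, whereas the lemma must hold for arbitrarily small $D$.
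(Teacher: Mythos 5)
Your proposal is correct and follows essentially the same route as the paper: both exploit the random stream order to view the prefix as a uniformly random $i$-subset, express the cluster count as a sum of negatively correlated indicators with mean $i/k$, apply Theorem~\ref{thrm:chernoffneg} with deviation parameter $\tfrac{1}{2}$ to get the bound $e^{-i/(8k)}$, and finish with a union bound over the $k$ clusters and the at most $n$ admissible values of $i$. The only cosmetic difference is that you index the indicators by vertices of $C_j$ while the paper indexes them by stream positions; these are equivalent descriptions of the same hypergeometric variable.
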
 

\begin{proof} 
Fix any index $i  \geq D \log^{1+\delta}(n)$. Define for each $j\in [k]$. 
the bad event $\mathcal{A}_j$ that cluster $j$  has less than $\frac{i}{2k}$ vertices.
after the arrival of $i$ vertices. 
Given our assumption on the random order of the stream, the first $i$ vertices form 
a random $i$-subset of $[n]$.   
Let $Y_j$ be the number of vertices from cluster $j$ among the first $i$ vertices. 
We observe that the distribution of $Y_j$ is the hypergeometric distribution $H(n, n/k, i)$\footnote{Recall, 
the hypergeometric distribution $H(N,R,s)$ is a discrete probability distribution that 
describes the probability of choosing $r$ red balls in $s$ draws of balls 
without replacement from a finite population of size $N$ containing exactly $R$ red balls.}. 
Therefore, we obtain the following exact expression for the probability of the bad event $\mathcal{A}_j$:

$$ \Prob{ \mathcal{A}_j } = \sum_{r=0}^{i/2k} \Prob{Y_j=r} = \sum_{r=0}^{i/2k}   \frac{ {n/k \choose r} {n(k-1)/k \choose i-r } }{ {n \choose i} }.$$

Even if an asymptotic analysis using Stirling's formula is possible, a less tedious approach is possible. 
We express $Y_j$ as the sum of $i$ indicator variables $Y_{j,1},\ldots,Y_{j,i}$ where 
$Y_{j,l}=1$ if and only if the $l$-th vertex $v$ has $\Psi(v)=j$, for all $l=1,\ldots,i$.
Clearly, $\Mean{Y_j} = \frac{i}{k}$. Notice that even if the indicator random variables 
are not independent, they are negatively correlated. Therefore, Theorem~\ref{thrm:chernoffneg} applies, 
obtaining 

$$ \Prob{ Y_j \leq (1-\frac{1}{2}) \frac{i}{k} } \leq e^{-\frac{i}{8k}} \leq e^{-\frac{D\log^{1+\delta}(n)}{8k}} \ll  o(n^{-1}).$$

\noindent The proof is completed by taking a union bound over $k$ machines and $(n-D\log^{1+\delta}(n)$ vertices.
Specifically, 
let $\mathcal{E}$ be the event that there exists an index $i \geq D \log^{1+\delta} (n)$ such that 
 $\cup_{j=1}^k \mathcal{A}_j$ is true. Then, 

$$ \Prob{\mathcal{E}} \leq (n-D \log^{1+\delta} (n))k o(n^{-1}) =o(1).$$ 
\end{proof}

\hide{
\begin{proof} 
Let $C = \frac{1}{2}$. Fix any  $i \geq D \log^{1+\delta}(n)$. Let $S$ be the set of the first 
$i$  vertices arriving in the incidence stream.  Let $\mathcal{A}_i$
be the event that there exists a cluster $j$ of the planted partition  for some $j \in [k]$ 
with at most $ C \frac{i}{k} $ vertices  in $S$   after the arrival of exactly $i$ vertices. 
We upper bound the probability $\Prob{ \mathcal{A}_i }$ by taking 
a union bound over $k$ machines.

\begin{align*}
\Prob{ \mathcal{A}_i  }  &\leq k \sum_{\alpha=0}^{C \frac{i}{k} } {i \choose \alpha} \frac{1}{k^{\alpha}} (1-\tfrac{1}{k})^{ i-\alpha} \leq k \sum_{\alpha=0}^{C \frac{i}{k} }  \big(\frac{ie}{\alpha}\big)^{\alpha}   \frac{1}{k^{\alpha}} (1-\tfrac{1}{k})^{ i-\alpha} \\
&\leq k \sum_{\alpha=0}^{C \frac{i}{k} }  \big(\frac{ie}{\alpha}\big)^{\alpha}   \frac{1}{(k-1)^{\alpha}} (1-\tfrac{1}{k})^{ i } \leq k  e^{-\tfrac{i}{k} }  \sum_{\alpha=0}^{C \frac{i}{k} }  \big(\frac{ie}{(k-1)\alpha}\big)^{\alpha}    \\ 
&\leq   e^{-\tfrac{i}{k}} Ci \Big( \frac{ke}{C(k-1)} \Big)^{C \frac{i}{k} }  = \exp{ \Big( -\frac{i}{k}+\log(Ci)+\frac{Ci}{k}\log\big( \frac{ke}{C(k-1)}\big) \Big)}  \\
&= \exp{ \Big( -(1-o(1))\frac{i}{2k} \Big)} =o(n^{-1}).
\end{align*} 

\noindent Let $\mathcal{E}$ be the event that there exists an index $i \geq D \log^{1+\delta} (n)$ such that 
 $\mathcal{A}_i$ is true. By a union bound over all possible indices $i \geq D \log^{1+\delta} (n)$  we obtain that
the probability tends to 0 as $n \rightarrow +\infty$.

$$ \Prob{\mathcal{E}} \leq (n-D \log^{1+\delta} (n)) o(n^{-1}) =o(1).$$ 
\end{proof} 
}

The second lemma states that as soon as  $\log^{6+\delta} (n)$ vertices have arrived 
in the stream for any $\delta>0$, then  the remaining vertices have at least $\log^6 (n)$
neighbors in the set of already arrived vertices.

\begin{lemma}
\label{lem:lem2}  
Let $\delta>0$  be any positive constant. After $\log^{6+\delta}(n)$ vertices have arrived, all remaining vertices in the stream have 
$\log^6{(n)}$ neighbors which reside in the $k$ machines. 
\end{lemma}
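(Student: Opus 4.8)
The plan is to fix the moment at which exactly $i_0 = \log^{6+\delta}(n)$ vertices have arrived and to show, by a concentration-plus-union-bound argument, that at that moment every one of the remaining vertices already has $\log^6(n)$ neighbors among these $i_0$ placed vertices. Since the number of already-placed neighbors of a vertex only grows as the stream proceeds, establishing the claim at this single moment suffices.

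First I would fix an arbitrary remaining vertex $v$ and let $X_v$ denote the number of its neighbors among the $i_0$ already-arrived vertices. The key observation is that in $G(n,k,p,q)$ the edge set is generated independently of the random stream order, so conditioning on which $i_0$-subset $S$ has arrived (and on $v \notin S$) leaves the indicators $\mathbf{1}(\{u,v\}\in E)$, $u \in S$, mutually independent, each equal to $1$ with probability either $p$ or $q$, hence with probability at least $q = \Theta(1)$. Thus, for every fixed admissible $S$, the variable $X_v$ is a sum of $i_0$ independent $\{0,1\}$ variables with $\Mean{X_v} \geq q\,i_0 = q\log^{6+\delta}(n)$, and the bound I derive will therefore hold unconditionally.

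Next I would apply the multiplicative Chernoff bound of Theorem~\ref{thrm:chernoff} with the deviation parameter $\tfrac12$. Because $\tfrac12\Mean{X_v} \geq \tfrac{q}{2}\log^{6+\delta}(n) \geq \log^6(n)$ for all large $n$ (the extra $\log^{\delta}(n)$ factor provides the needed slack), the bad event $\{X_v < \log^6(n)\}$ is contained in $\{X_v \leq \tfrac12\Mean{X_v}\}$, whence
$$\Prob{X_v < \log^6(n)} \leq e^{-\Mean{X_v}/8} \leq e^{-\frac{q}{8}\log^{6+\delta}(n)}.$$
Since $\log^{6+\delta}(n) = \omega(\log n)$, this per-vertex failure probability is $o(n^{-1})$. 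A union bound over the at most $n$ remaining vertices then gives a total failure probability of at most $n\,e^{-\frac{q}{8}\log^{6+\delta}(n)} = o(1)$, completing the argument.

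The step I expect to require the most care is not the concentration inequality itself but the conditioning: one must argue cleanly that the randomness of the arrival order is independent of the edge randomness, so that the per-vertex tail bound holds uniformly over all realizations of the arrived set $S$ and hence survives the union bound. Everything else is a routine application of Chernoff, and notably I would not even need Lemma~\ref{lem:lem1} here, since the crude lower bound $P_{ij} \geq q = \Theta(1)$ on every edge probability already yields $\Mean{X_v} = \Omega(\log^{6+\delta}(n))$; the gap between the target $\log^6(n)$ and the expectation $\Theta(\log^{6+\delta}(n))$ is exactly what the exponent $6+\delta$ is engineered to provide.
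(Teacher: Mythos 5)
Your proof is correct and follows essentially the same strategy as the paper's: a per-vertex lower-tail bound on the number of neighbors among the first $\log^{6+\delta}(n)$ arrivals (a sum of independent edge indicators each with success probability at least $q=\Theta(1)$), followed by a union bound over the at most $n$ remaining vertices. The only difference is technical: the paper estimates the binomial lower tail directly via $\binom{g(n)}{j}\le \bigl(eg(n)/j\bigr)^j$ and $1-q\le e^{-q}$, whereas you invoke the multiplicative Chernoff bound of Theorem~\ref{thrm:chernoff}; this is a cleaner route to the same $o(n^{-1})$ per-vertex failure probability.
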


\begin{proof}

Let $i = g(n) \geq \log^{6+\delta}(n)$, 
be the number of vertices that  have arrived in the 
incidence stream. 
Let $\mathcal{E}$ be the event that there exists a vertex $v$ that has arrived after the first $i$ vertices 
with less than $\log^6 (n)$ neighbors in the set of arrived vertices. We obtain that $\Prob{\mathcal{E}}=o(1)$ using
the following upper bound. 


\begin{align*}
\Prob{ \mathcal{E}} &\leq n \sum_{j=0}^{\log^6 (n)} {g(n) \choose j} \big(\frac{p}{1-q}\big)^{i}(1-q)^{g(n)} \leq n\sum_{j=0}^{\log^6 (n)} \big(\frac{g(n)e}{j}\big)^j \big(\frac{p}{1-q}\big)^{j}(1-q)^{g(n)} \\
&\leq  Cn \log^6 (n) \Big(\frac{pg(n)e}{j(1-q)}\Big)^{\log^6 (n)} e^{-qg(n)} =o(1).
\end{align*} 
\end{proof}

\subsection{Path-$2$ classification}
\label{subsec:path2classification} 

\hide{Intuitively, if we wish to assign a vertex to its correct machine which contains 
the set of vertices from its cluster, one may perform 
walks of length greater or equal than 1 and see for which machine 
the probability of escaping is the smallest possible. 
This is exactly what the seeding procedure of \cite{stantonstreaming} does:
it allows walks of length 1 to correctly distinguish the true cluster
for each vertex. We will refer to this as path-$1$ classification. 
The analysis of \cite{stantonstreaming} shows that the gap $p-q$ has to be constant
in order to recover the true planted partition function $\Psi$. }

The main theoretical result of this Section is that we can use 
paths of length 2 to recover the partition $\Psi$ \whp even when $p-q=o(1)$. 
Our algorithm avoids making final decisions for the first $B$ vertices until
the end of the process and uses the fact that vertices from the same cluster 
have more common neighbors compared to a pair of vertices from different clusters.
\hide{
Here, we show how paths of length 2 can result in a correct partition of the planted partition model
with the use of a small-sized buffer. 
In the next section we argue why this length results in optimal results. 
The idea of the algorithm is simple: avoid making final decisions for the first $B$ vertices until
the end of the process
and use the fact that vertices from the same cluster have more common neighbors
compared to a pair of vertices from different clusters. In other words, a random walk of length 2 
is more likely to end up in the same cluster from where it started compared to a different one. }
Our algorithm is shown below.

\medskip
\label{alg:alg2}
\framebox{
\begin{minipage}[b]{0.8\linewidth}
\begin{enumerate} 
\item Place the first $B=\log^{6+\delta}(n)$ vertices in any of the $k$ machines, marked as non-classified.  
Here, $\delta>0$ is any positive constant.
\item Let $S$ be a random sample of  size $3k\log n$ vertices from the set of $B$ non-classified vertices. 
\item Let $R$ be a random sample of  $\log^{6} n$ vertices from the set of $B$ non-classified vertices. 
\item For the $j$-th vertex, $B+1 \leq j \leq n$, do the following: 
	\begin{itemize}
               \item For each $x \in S$ compute the number of common neighbors of $j,x$ in $R$. 
               \item Let $M= (p^2+(k-1)q^2) \log^{6} n$. 
               Assign $j$ to the same cluster with a vertex $x^* \in S$ which has at least $M-M^{2/3}$
              common neighbors with $j$.  
              Ties are always assigned to the vertex with the smallest id.  
              Remove non-classified tag from $x^*$. 
	\end{itemize}
\item  Perform the same procedure for the remaining, if any, non-classified vertices. 
\end{enumerate} 
\end{minipage}
\medskip\par}

We prove the correctness of the algorithm. 
The next lemma states that when we obtain the random sample $S$, 
there always exists at least one vertex from each cluster of the partition. 
This is critical since our algorithm assigns each incoming vertex $v$
to a representative vertex from cluster $\Psi(v)$. 
Among the various possible choices for a representative of cluster $c$, 
we choose the vertex $u$ with the minimum vertex id, namely $u=\arg\min \{u \in S: \Psi(u)=c \}$.

\begin{lemma} 
\label{lem:step1alg2} 
Let $S$ be a random sample of size $3k\log n$ vertices from a population 
of $j \geq \log^{6+\delta} (n)$ vertices. 
Then, with high probability there exists at least one representative vertex from 
each cluster of the planted  partition in  $S$
\end{lemma}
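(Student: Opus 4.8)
The plan is to reduce the statement to a coupon-collector style argument by first controlling the composition of the population and then bounding the chance that the sample $S$ misses some cluster. \emph{First}, I would observe that the population of $j \geq \log^{6+\delta}(n)$ vertices is exactly the pool of first-arrived vertices to which Lemma~\ref{lem:lem1} applies: taking $i=j$, which exceeds $D\log^{1+\delta}(n)$ for every fixed $D$ once $n$ is large, Lemma~\ref{lem:lem1} guarantees that \whp each cluster $c \in [k]$ contributes at least $\frac{j}{2k}$ vertices to the population. Call this event $\mathcal{B}$, so $\Prob{\mathcal{B}} = 1-o(1)$, and condition on $\mathcal{B}$ for the remainder of the argument.

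\emph{Second}, conditioned on $\mathcal{B}$, I would fix a cluster $c$ and bound the probability that none of the $3k\log n$ sampled vertices belongs to $c$. Since the fraction of population vertices lying in $c$ is at least $\frac{1}{2k}$, each draw lands outside $c$ with probability at most $1-\frac{1}{2k}$, so the failure probability for cluster $c$ is at most $\left(1-\frac{1}{2k}\right)^{3k\log n} \leq e^{-3(\log n)/2} = n^{-3/2}$, using the elementary inequality $1-x \leq e^{-x}$. Taking a union bound over the $k=\Theta(1)$ clusters, the probability that some cluster is unrepresented in $S$ is at most $k\,n^{-3/2} = o(1)$. A final union bound combining this with $\Prob{\mathcal{B}}=1-o(1)$ then yields the claim.

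\emph{Main obstacle.} The computation is largely routine; the single point requiring care is the sampling model. If $S$ is drawn without replacement, the draws are not independent, so I cannot directly multiply the per-draw probabilities. The clean fix is to note that removing non-$c$ vertices only \emph{increases} the conditional probability of subsequently hitting $c$, so the event ``$S$ avoids $c$ entirely'' is no more likely under sampling without replacement than under sampling with replacement; equivalently, one may invoke the same negative-correlation / hypergeometric-tail reasoning already used in Lemma~\ref{lem:lem1}. This validates the independent-draws bound above as an upper estimate of the true failure probability. The only remaining thing to check is that the constant $3$ in the sample size $3k\log n$ drives the per-cluster failure probability below $n^{-1}$ with enough slack to absorb the union bound over clusters, which it does.
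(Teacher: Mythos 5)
Your proposal is correct and follows essentially the same route as the paper: invoke Lemma~\ref{lem:lem1} to get at least $j/2k$ vertices per cluster in the population, bound the per-cluster miss probability by $\left(1-\tfrac{1}{2k}\right)^{3k\log n}\leq n^{-3/2}$ via $1-x\leq e^{-x}$, and union-bound over the $k$ clusters. Your extra remark on sampling without replacement only making the bound better is a sound clarification of a point the paper leaves implicit, not a different argument.
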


\begin{proof} 
First, notice that by Lemma~\ref{lem:lem1} there exist at least $j/2k \geq \frac{\log^{6+\delta} (n)}{2k}$ vertices
from each cluster $i=1,\ldots,k$. 
Let $\mathcal{E}_i$ be the event of failing to sample at least one vertex from cluster $i$ of the partition,
$i=1,\ldots,k$. We can upper bound the probability of the union $\cup_{i=1}^k \mathcal{E}_i$ of these
bad events as follows:  

\begin{align*} 
\Prob{ \cup_{i=1}^k \mathcal{E}_i} &\leq k \Big( 1-\frac{ \tfrac{j}{2k}}{j} \Big)^{3k\log n}  \\ 
 &\leq ke^{-3k\log n/2k} = o(n^{-1}). 
\end{align*} 

\end{proof}

The next theorem is our main theoretical result. It states that the algorithm recovers 
the true partition $\Psi$ \whp. 

\begin{lemma}
If $p=\Theta(1),q=\Theta(1), p-q = \omega\big(\frac{1}{\log (n)}\big)$, then 
all vertices are classified correctly \whp. 
The algorithm runs in sublinear time. 
\end{lemma}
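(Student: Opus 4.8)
The plan is to show that, for every incoming vertex $j$ and every candidate $x\in S$, the number of common neighbors of $j$ and $x$ inside the reference set $R$, namely $Z_{j,x}=\sum_{w\in R}\One{\{j,w\}\in E}\One{\{x,w\}\in E}$, is large when $\Psi(x)=\Psi(j)$ and small otherwise, and that the separation is wide enough for the threshold $M-M^{2/3}$ to distinguish the two cases \whp. Fix $j$ arriving after the first $B$ vertices with $\Psi(j)=c$, and fix $x\in S$; since $R$ and $S$ lie within the first $B$ vertices while $j$ arrives later, the edge slots $\{j,w\}$ and $\{x,w\}$ are all distinct as $w$ ranges over $R$, so the summands of $Z_{j,x}$ are independent Bernoulli variables and the multiplicative Chernoff bound (Theorem~\ref{thrm:chernoff} together with its standard upper-tail counterpart) applies. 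Writing $r_{c'}=|R\cap C_{c'}|$, a one-line computation gives $\Mean{Z_{j,x}}=r_c p^2+\sum_{c'\ne c}r_{c'}q^2=:\mu_s$ when $\Psi(x)=c$, and $\Mean{Z_{j,x}}=:\mu_d$ with $\mu_s-\mu_d=(p-q)\left(r_c p-r_{\Psi(x)}q\right)$ when $\Psi(x)\ne c$; the set $R$ is there precisely to manufacture this gap.

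Next I would control the cluster sizes inside $R$. Because $R$ is sampled from the first $B=\log^{6+\delta}n$ vertices, in which every cluster already has $\Omega(\log^{6+\delta}n)$ representatives by Lemma~\ref{lem:lem1}, the same negatively-correlated Chernoff argument (Theorem~\ref{thrm:chernoffneg}) shows each $r_{c'}$ concentrates at $\Theta(\log^6 n/k)$ \whp. Since $k=\Theta(1)$, this places the same-cluster mean at $\mu_s=\Theta(\log^6 n)$ and yields the uniform gap $\mu_s-\mu_d=\Theta\!\left((p-q)^2\log^6 n\right)$ over all $k-1$ wrong clusters. Now comes the calibration, which I expect to be the heart of the argument. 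The Chernoff bound gives a deviation $t$ with failure probability $\exp(-\Theta(t^2/\log^6 n))$, so to survive a union bound over all $n$ vertices and all $3k\log n$ candidates I need $t=\omega(\log^{3.5}n)$; the threshold's correction $M^{2/3}=\Theta(\log^4 n)$ satisfies $\log^4 n=\omega(\log^{3.5}n)$, so the fluctuations fit inside it. The hypothesis $p-q=\omega(1/\log n)$ is exactly what forces $\mu_s-\mu_d=\omega(\log^4 n)$, i.e.\ the signal strictly dominates the correction $M^{2/3}$; this boundary is why walks over $\log^6 n$ common-neighbor trials are the right choice. The threshold $M-M^{2/3}$, whose leading term $M$ matches the same-cluster mean $\mu_s$ (up to the $\Theta(1)$ normalization by $k$) and whose correction is the fluctuation budget, therefore sits $\Theta(\log^4 n)$ below $\mu_s$ and $\omega(\log^4 n)$ above $\mu_d$; hence \whp $Z_{j,x}\ge M-M^{2/3}$ for every same-cluster candidate and $Z_{j,x}<M-M^{2/3}$ for every different-cluster candidate.

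I would then assemble the global claim with Boole's inequality over the $O(n\log n)$ pairs $(j,x)$, together with the two $o(1)$ failure events ``some $r_{c'}$ is unbalanced'' and ``$S$ misses a cluster'' (the latter being Lemma~\ref{lem:step1alg2}). On the complementary event, of probability $1-o(1)$, the representative $x^\ast$ selected for each $j$ necessarily satisfies $\Psi(x^\ast)=\Psi(j)$, so the tie-broken assignment is correct; the first $B=o(n)$ buffered vertices are reclassified in Step~5 by the identical argument applied to the now-classified population. This establishes that all vertices are classified correctly \whp.

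Finally, for the running time I would note that, unlike degree-based rules, each decision inspects only $j$'s edges to the $\log^6 n$ vertices of $R$ against the $O(\log n)$ candidates of $S$, costing $O(\log^7 n)=o(n)$ work per vertex and $O(n\log^7 n)$ in total; since $p,q=\Theta(1)$ gives $m=\Theta(n^2)$, this is $o(n^2)=o(m)$, i.e.\ sublinear in the input. The one delicate point throughout is the three-way scale matching of the paragraph above — trials $\log^6 n$, Chernoff noise $\log^{3.5}n$, threshold slack $\log^4 n$, and signal $\omega(\log^4 n)$ — and checking that $p-q=\omega(1/\log n)$ is precisely the regime in which this ordering holds.
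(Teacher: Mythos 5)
Your proposal is correct and follows essentially the same route as the paper: count common neighbors in $R$, compute the same-cluster and different-cluster means (the paper's $Y_x$ and $Z_x$), apply the multiplicative Chernoff bound, and observe that $p-q=\omega(1/\log n)$ makes the signal $\Theta((p-q)^2\log^6 n)$ dominate the $\Theta(\log^4 n)$ fluctuation/threshold scale, finishing with a union bound and the same $O(n\,\mathrm{poly}\log n)$ versus $m=\Theta(n^2)$ running-time remark. Your treatment is in fact slightly more careful than the paper's in two places it glosses over — the concentration of the cluster counts $|R\cap C_{c'}|$ inside the sample $R$, and the reclassification of the first $B$ buffered vertices in Step~5.
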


\begin{proof}

{\bf Correctness:} Let $j$ be the index of the incoming vertex,  $x \in S$.
We condition on the event $\mathcal{E}$ that there exists at least one 
vertex from each cluster of the planted partition in the sample $S$. 
Define $Y_x$  as the number of triples $(j,u,x)$ where $u \in R$ and 
$x \in S$ such that $\Psi(j)=\Psi(x)$. 
Similarly, let $Z_x$  be the number of triples $(j,u,x)$, where 
$u \in R$ and $x \in S$ such that $\Psi(j) \neq \Psi(x)$. 
The expected values of $Y_x,Z_x$ are respectively

$$\Mean{Y_x} = \big(p^2+(k-1)q^2\big) \log^{6}n,$$ 

\noindent and 

$$\Mean{Z_x} = \big(2pq+(k-2)q^2\big) \log^{6}n.$$

\noindent A direct application of the multiplicative Chernoff bound, see Theorem~\ref{thrm:chernoff}, yields

\begin{align*}
\Prob{ Y_x \leq \Mean{Y_x} - \Mean{Y_x}^{2/3} } &\leq  e^{ - O(\Mean{Y_x}^{1/3})} = n^{-O(\log n)}\\  
\end{align*}
and 
\begin{align*}
\Prob{  Z_x \geq  \Mean{Z_x} + \Mean{Z_x}^{2/3} } &\leq  e^{- O(\Mean{Z_x}^{1/3})} =n^{-O(\log n)}\\  
\end{align*}

\noindent Furthermore, due to our assumption on $p,q=\Theta(1)$ and the gap $p-q \gg \tfrac{1}{\log n}$ we obtain 

\begin{align*}
(p-q)^2 \log^{6}n &\gg \Big( (p^2+(k-1)q^2)^{2/3} +(2pq+(k-2)q^2)^{2/3}\Big) \log^4 n \Leftrightarrow\\ 
\Mean{Y_x} - \Mean{Z_x} &\gg \Mean{Y_x}^{2/3}+\Mean{Z_x}^{2/3} \Leftrightarrow \Mean{Z_x} + \Mean{Z_x}^{2/3} \ll \Mean{Y_x} - \Mean{Y_x}^{2/3}.
\end{align*}

\noindent The above inequalities suggest that the number of 2-paths between $j$ and any vertex 
$x \in S$ such that $\Psi(x)=\Psi(j)$ is significantly larger compared to the respective count
between $j,x'$  such that $\Psi(x')\neq \Psi(j)$ \whp. 
Let $\mathcal{B}_j$ be the even that $j$ is misclassified. Combining the above inequalities
with Lemma~\ref{lem:step1alg2} results in 

$$ \Prob{ \mathcal{B}_j  } \leq  \Prob{ \mathcal{\bar{E}}} + \Prob{ \mathcal{B}_j | \mathcal{E}  }  \Prob{ \mathcal{E}  }= o(n^{-1}) .$$

\noindent By a union bound over $O(n)$ vertices, the proof is complete. 

{\bf Running time:} The algorithm  can be implemented in  $O(n\text{poly}\log(n))$ time in expectation.
Sampling $O(\text{poly}\log(n))$ samples can be implemented in expected 
$O(\text{poly}\log(n))$ time, c.f. \cite{knuth2007seminumerical}. 
Also, checking whether a neighbor of an incoming vertex resides in a 
given machine can be done in $O(1)$ time by using hash tables to store
the information within each machine. 
Notice that the number of edges in $G$ is $O(n^2)$ and therefore
the proposed algorithm is a sublinear time algorithm. 
\end{proof}

\subsection{Path-$t$ classification} 
\label{subsec:pathtclassification}

\noindent We conclude this section by discussing the effect of the length of the walk $t \geq 2$.  
Intuitively, $t$ should not be too large, otherwise the random walk will mix. 
We argue, that among all constant lengths $t\geq 2$, the choice $t=2$ 
allows the smallest possible gap for which we can find the true partition $\Psi$. 
It is worth outlining that $t=2$ is also in favor of the graph partitioning efficiency as well, 
since the smaller $t$ is, the less operations are required. 
Our results extend the results of Zhou and Woodruff \cite{zhou2004clustering}
to the streaming setting.  Our main theoretical result is the following theorem.

\begin{theorem} 
\label{thrm2} 
Let $t\geq 2, t=\Theta(1)$ be the length of a walk. 
If $p,q=\Theta(1)$ such that $p(1-p),q(1-q)=\Theta(1)$, then 
$t=2$ results in the largest possible gap $p-q$ for which we can decide whether 
$\Psi(u)=\Psi(v)$ or not \whp, where $u\neq v \in [n]$. 
\end{theorem}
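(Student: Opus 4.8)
The plan is to reduce the pairwise decision to a single signal-to-noise comparison. Fix $u\neq v$, let $A$ be the adjacency matrix of $G\sim G(n,k,p,q)$, and take as the statistic the number of length-$t$ walks between $u$ and $v$, namely $W_t:=(A^t)_{uv}$. I would show that the conditional mean of $W_t$ is strictly larger when $\Psi(u)=\Psi(v)$ than when $\Psi(u)\neq\Psi(v)$, with gap $\Delta_t$, while the fluctuation of $W_t$ has standard deviation $\sigma_t$ of the same order in both cases. Placing a threshold midway between the two means and invoking the second moment method (Theorem~\ref{chebyshev}) bounds the error probability of the fixed pair by $\bigO\big((\sigma_t/\Delta_t)^2\big)$, so the pair is decidable \whp exactly when $\Delta_t\gg\sigma_t$. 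Computing the threshold gap $\tau(t)$ for which this holds and extremizing over $t$ is the whole game.

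For the signal I would diagonalize $\bar A:=\Mean{A}$. For $k=\Theta(1)$ equal clusters, $\bar A$ has top eigenvalue $\mu_1=\Theta(n)$ on the all-ones vector, eigenvalue $\mu_2=(p-q)n/k=\Theta((p-q)n)$ of multiplicity $k-1$ on the cluster-contrast space, and all remaining eigenvalues $0$. Writing $P_1,P_2$ for the corresponding spectral projections and using $(P_2)_{uv}=(k-1)/n$ for same-cluster pairs versus $-1/n$ otherwise, the leading term $(\bar A^t)_{uv}=\mu_1^t/n+\mu_2^t(P_2)_{uv}$ has a common part $\mu_1^t/n$ that cancels between the two cases, leaving $\Delta_t=\mu_2^t\cdot(k/n)=\Theta\big((p-q)^t n^{t-1}\big)$. (Replacing $\Mean{W_t}$ by $(\bar A^t)_{uv}$ only drops walks that reuse an edge, which for $t=\Theta(1)$ on a dense graph are lower order.)

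For the noise I would write $A=\bar A+E$ with $E$ centered and linearize: $W_t\approx(\bar A^t)_{uv}+\sum_{i=0}^{t-1}(\bar A^{i}E\bar A^{t-1-i})_{uv}$. The variance of the linear part is $\sum_{a,b}\Var{E_{ab}}\,c_{ab}^2$ with $c_{ab}=\sum_i(\bar A^{i})_{ua}(\bar A^{t-1-i})_{bv}$. Because $\bar A^{0}=I$, the terms $i=0$ and $i=t-1$ make $c_{ub}$ and $c_{av}$ as large as $\mu_1^{t-1}/n=\Theta(n^{t-2})$, whereas generic $c_{ab}$ are only $\Theta(n^{t-3})$; summing, the $\Theta(n)$ edges incident to $u$ or to $v$ dominate and give $\Var{W_t}=\Theta(n^{2t-3})$, hence $\sigma_t=\Theta(n^{t-3/2})$. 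This is exactly where the hypothesis $p(1-p),q(1-q)=\Theta(1)$ enters: it keeps every edge variance $\Theta(1)$, so $\sigma_t$ has the same order for same- and different-cluster pairs.

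Combining, $\sigma_t/\Delta_t=\Theta\big(n^{-1/2}(p-q)^{-t}\big)$, so Chebyshev drives the error probability to $o(1)$ precisely when $(p-q)^{t}\gg n^{-1/2}$, i.e. for gaps down to $\tau(t)=n^{-1/(2t)}$. Since $\tau(t)$ is increasing in $t$, it is smallest at $t=2$, where it equals $n^{-1/4}$, recovering the Zhou--Woodruff threshold \cite{zhou2004clustering} and establishing the optimality of length-$2$ walks; any larger $t$ strictly demands a wider gap, matching the intuition that longer walks mix and wash out the cluster signal. The main obstacle I anticipate is the rigorous variance estimate: one must verify combinatorially that the single-edge-at-an-endpoint contributions dominate and that both the cross-covariances among distinct walk terms and the neglected higher-order terms (those with $\ge 2$ copies of $E$) are genuinely of lower order, so that $\Var{W_t}=\Theta(n^{2t-3})$ really holds and the second moment method applies.
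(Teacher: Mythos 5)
Your proposal is correct and follows essentially the same route as the paper: the signal $\Delta_t=\Theta\big(n^{t-1}(p-q)^t\big)$ comes from the spectrum of the expected adjacency matrix (the paper derives it via a $2\times 2$ recurrence on the two entry types of $A^t$, which is your spectral-projection computation in disguise), and the noise bound $\sigma_t=\Theta(n^{t-3/2})$ comes from the second-moment estimate $\Var{W_t}=\Theta(n^{2t-3})$, giving the threshold $(p-q)^t\gg n^{-1/2}$ (up to a $\sqrt{\log n}$ factor) that is minimized over $t\geq 2$ at $t=2$. The variance estimate you flag as the main remaining obstacle is precisely the step the paper does not prove either, citing Lemma~5 of Zhou and Woodruff \cite{zhou2004clustering} for it.
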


To prove Theorem~\ref{thrm2}, we compute first the expected number of walks of length $t$ between any two vertices in $G(n,k,p,q)$. 
Because of the special structure of the graph, we are able to derive an exact 
formula\footnote{Despite the large amount of work on the planted partition, 
we were not able to find a closed formula but only bounds in the existing literature.}. 

\begin{lemma} 
\label{lem:closedformula} 
Let $G \sim G(nk,k,p,q)$ and $p_t=A_{uv}^t, q_t=A_{uw}^t$ be the two types of entries that 
appear in $A^t$ depending on whether  $\Psi(u) = \Psi(v)$ and $\Psi(u) \neq \Psi(w)$
respectively. 
Then

$$  p_t= (k-1)  \frac{ n^{t-1} (p-q)^{t} }{k} + \frac{ n^{t-1} (p+(k-1)q)^{t}}{k},$$

 \noindent and

$$   q_t = -\frac{ n^{t-1} (p-q)^{t} }{k} + \frac{ n^{t-1} (p+(k-1)q)^{t}}{k}.$$ 
 \end{lemma}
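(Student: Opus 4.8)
The plan is to read off $p_t$ and $q_t$ as the two off-diagonal entry types of the matrix power $A^t$, where $A$ is the $nk\times nk$ expected adjacency matrix of $G(nk,k,p,q)$, having entry $p$ whenever two vertices lie in the same cluster and $q$ otherwise. After relabelling vertices so that the $k$ clusters occupy consecutive blocks---a permutation similarity that preserves the same-cluster/different-cluster distinction---I would write $A$ in Kronecker form. With $J_m$ the $m\times m$ all-ones matrix and $I_m$ the $m\times m$ identity,
$$A=(p-q)\,(I_k\otimes J_n)+q\,(J_k\otimes J_n).$$
One checks directly that the diagonal cluster blocks equal $pJ_n$ and the off-diagonal blocks equal $qJ_n$, so this $A$ carries $p$ on its diagonal; keeping track of this convention is exactly what makes the formula exact rather than merely asymptotic, and it is the one point I would be most careful about.

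The key step is that the two Kronecker summands commute (the relevant identity being $J_n^2=nJ_n$, so that both products equal $n(J_k\otimes J_n)$), hence they are simultaneously diagonalizable by a basis adapted to the eigenspaces of $J_n$ and $J_k$. This yields three eigenspaces: the global all-ones vector $\mathbf 1_{nk}$ with eigenvalue $\lambda_1=n\big(p+(k-1)q\big)$; the span of $\{v\otimes \mathbf 1_n : v\perp \mathbf 1_k\}$, of dimension $k-1$, with eigenvalue $\lambda_2=n(p-q)$; and the orthogonal complement, of dimension $k(n-1)$, with eigenvalue $0$. Consequently, for every $t\geq 1$,
$$A^t=\lambda_1^t\,P_1+\lambda_2^t\,P_2,\qquad P_1=\tfrac{1}{nk}J_{nk},\quad P_2=\Big(I_k-\tfrac1kJ_k\Big)\otimes \tfrac1nJ_n,$$
the $0$-eigenspace contributing nothing. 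Reading off a same-cluster entry gives $(P_1)_{uv}=\tfrac1{nk}$ and $(P_2)_{uv}=\tfrac{k-1}{kn}$, while a different-cluster entry gives $(P_1)_{uw}=\tfrac1{nk}$ and $(P_2)_{uw}=-\tfrac1{kn}$. Hence $p_t=\big(\lambda_1^t+(k-1)\lambda_2^t\big)/(nk)$ and $q_t=\big(\lambda_1^t-\lambda_2^t\big)/(nk)$, and substituting $\lambda_1,\lambda_2$ and factoring out $n^{t-1}/k$ reproduces the claimed expressions.

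The remaining work is routine algebra, so the only genuine obstacles are structural: correctly identifying the three eigenspaces with their multiplicities and verifying that $P_1$ and $P_2$ are the associated orthogonal projections (equivalently, that $P_1$, $P_2$, and the $0$-eigenspace projection resolve the identity). As a self-contained alternative that avoids spectral language, I would instead argue that $A^t$ has, by symmetry, only two off-diagonal values $a_t$ (same cluster) and $b_t$ (different), split the sum $(A^{t+1})_{uv}=\sum_w A_{uw}(A^t)_{wv}$ according to the cluster of the intermediate vertex $w$, and obtain the $2\times2$ linear recursion $a_{t+1}=npa_t+(k-1)nq\,b_t$ and $b_{t+1}=nq\,a_t+n\big(p+(k-2)q\big)b_t$. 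The combinations $a_t+(k-1)b_t$ and $a_t-b_t$ decouple into geometric sequences with ratios $\lambda_1$ and $\lambda_2$ and initial values $p+(k-1)q$ and $p-q$ at $t=1$; solving these and inverting the linear change of variables again yields $p_t=a_t$ and $q_t=b_t$ in the stated closed form.
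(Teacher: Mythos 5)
Your proposal is correct, and your primary route is genuinely different from the paper's. The paper never diagonalizes $A$ itself: it observes that the block structure of $A^t$ forces only two entry types $p_t,q_t$, derives the $2\times 2$ recurrence $p_{t+1}=np\,p_t+(k-1)nq\,q_t$, $q_{t+1}=nq\,p_t+n\bigl(p+(k-2)q\bigr)q_t$, and then diagonalizes the $2\times 2$ transfer matrix $M$ (eigenvalues $n(p-q)$ and $n(p+(k-1)q)$, eigenvectors $(-(k-1),1)^{\top}$ and $(1,1)^{\top}$). Your \emph{alternative} argument is essentially this same proof, except that you decouple the recurrence by inspection via the combinations $a_t+(k-1)b_t$ and $a_t-b_t$ rather than writing out $U S U^{-1}$; incidentally, your coefficient $n\bigl(p+(k-2)q\bigr)$ is the dimensionally consistent one (the paper's $\tfrac{k-2}{k}nq$ is a typo for $(k-2)nq$, as its own eigenvalue matrix $S$ confirms). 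Your \emph{primary} argument instead diagonalizes the full $nk\times nk$ matrix $A=(p-q)(I_k\otimes J_n)+q(J_k\otimes J_n)$ and reads the entries of $A^t$ off the spectral projectors $P_1=\tfrac{1}{nk}J_{nk}$ and $P_2=\bigl(I_k-\tfrac1kJ_k\bigr)\otimes\tfrac1nJ_n$; I checked the eigenvalues, multiplicities, and projector entries, and the resulting formulas match. What the spectral route buys is that it makes the block structure and the vanishing of the rank-$k(n-1)$ eigenspace transparent, and it explains \emph{why} only the two quantities $n(p-q)$ and $n(p+(k-1)q)$ can appear; what the paper's route buys is that it needs no linear-algebraic setup beyond a $2\times2$ eigenproblem and generalizes immediately to any $P$ with the same two-value block pattern. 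You are also right to flag the diagonal convention $A_{uu}=p$: both your Kronecker form and the paper's recurrence (which sums over all $n$ intermediate vertices of a cluster, including $u$ and $v$ themselves) silently adopt it, and it is exactly what makes the formula exact.
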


\begin{proof}
Let $A$ be the $(p,q)$-adjacency matrix defined as $A_{uv}=p$ if $\Psi(u)=\Psi(v)$ and $A_{uw} = q$ if $\Psi(u) \neq \Psi(w)$
for each $u \neq v \in [n]$. 
It is easy to check that for any $t \geq 1$ the block structure of the planted partition
is preserved.
This implies that for any $t$, matrix $A^t$ has the same block structure as $A$
and therefore there are two types of entries in each row. 
Let $p_t,q_t$ be these two types of entries in $A^t$. For $t=1$, let $p_1=p$ and $q_1=q$.
Then, by considering the multiplication of the $u$-th line of 
$A^t$ with the $v$-th column of $A$ we obtain 

$$ p_{t+1} = pn p_t + (k-1) n q q_t,$$ 

\noindent and similarly by considering the multiplication of the $u$-th line of $A^t$ with the $w$-th column of $A$
we obtain 

$$ q_{t+1} = qn p_t+ pn q_t + \frac{k-2}{k} n q q_t.$$

We can write the recurrence in a matrix form. 

 $$
\left[ \begin{array}{c} p_{t+1} \\ q_{t+1} \end{array} \right] = M \times \left[ \begin{array}{c} p_{t} \\ q_{t}  \end{array} \right]
 $$

where $M =\begin{pmatrix} pn & (k-1) n q\\ qn  & pn  + \frac{k-2}{k} n q \end{pmatrix} $.
By looking the eigendecomposition of $M$, despite the fact that it is not symmetric, we can 
diagonalize it as 

$$ M = USU^{-1},$$

\noindent where 
$$ S= \begin{pmatrix} n(p-q) &  0 \\ 0  & n \big(p+(k-1)q \big) \end{pmatrix} $$

\noindent and 

$$U = \begin{pmatrix} -(k-1) &  1 \\ 1  &   1 \end{pmatrix}.$$

\noindent  Given the fact   

$$ M^k= U \begin{pmatrix} \big( n(p-q)\big)^k &  0 \\ 0  & \Big( n \big(p+(k-1)q \big) \Big)^k \end{pmatrix} U^{-1}$$

\noindent and simple algebraic manipulations (omitted)  we obtain that 

$$ p_t = (k-1)  \frac{ n^{t-1} (p-q)^{t} }{k} + \frac{ n^{t-1} (p+(k-1)q)^{t}}{k},$$ 

and 

$$ q_t  = -\frac{ n^{t-1} (p-q)^{t} }{k} + \frac{ n^{t-1} (p+(k-1)q)^{t}}{k}.$$ 
\end{proof}

\noindent Now, we are able to prove Theorem~\ref{thrm2}.

\begin{proof}[Theorem~\ref{thrm2}]
Let  $p_t = A_{uv}^t, q_t = A_{uw}^t$ where $u,v,w \in V(G)$ such that $\Psi(u)=\Psi(v) \neq \Psi(w)$
and $A$ is defined as in Lemma~\ref{lem:closedformula}. 
Also, define $\bar{A}$ to be the result of the randomized rounding of $A$.
By Lemma~\ref{lem:closedformula} we obtain 

$$p_t-q_t= \frac{n^{t-1} (p-q)^{t}}{k^{t-1}}.$$ 

\noindent Notice that we substituted $n$ by $n/k$ as we $G$ has $n$ vertices, 
with exactly $n/k$ vertices per cluster. 
Now, suppose $|A ^{t}_{uv} -  \bar{A}^{t}_{uv} |  \leq \gamma$,
where $\gamma>0$ is large enough such that the inequality holds \whp
and will be decided in the following. 
Then, if $\Psi(j_1) = \Psi( j_2) = \Psi(u)$ we obtain the following upper bound 

\begin{align*} 
| \bar{A}^{t}_{uj_1} -  \bar{A}^{t}_{uj_2} | &\leq |  A^{t}_{uj_1} - \bar{A}^{t}_{uj_1}| + | A^{t}_{uj_2} -  \bar{A}^{t}_{uj_2} | = 2\gamma.
\end{align*}

On the other hand if $\Psi(u)= \Psi( j_1) \neq  \Psi( j_2)$,
given that $|x| = |x+y-y| \leq |y|+|x-y| \rightarrow |x-y| \geq |x|-|y|$,
we obtain the following lower bound

\begin{align*} 
| \bar{A}^{t}_{uj_1} -  \bar{A}^{t}_{uj_2} | &\geq |  {A}^{t}_{uj_1} -   {A}^{t}_{uj_2} |  -  2\gamma = \frac{n^{t-1} (p-q)^{t}}{k^t} - 2\gamma.
\end{align*}

Therefore, if $\frac{n^{t-1} (p-q)^{t}}{k^{t-1}}-2\gamma  > 2\gamma$,  then 
there exists a signal that allows us to classify the vertex correctly. 
In order to find $\gamma$ we need to upper-bound the expectation of the non-negative random variable 
$Z=( \bar{A}^{t}_{uv} - A^{t}_{uv} )^2$. 

Zhou and Woodruff prove $\Mean{Z} = \Theta(n^{2t-3})$,c.f. Lemma 5, \cite{zhou2004clustering}. 
The proof of this claim is based on algebraic manipulation. 
It is easy to verify that $\Mean{Z}$ is dominated by the terms that correspond to two paths 
of length $t$ which overlap on a single edge. 
Applying Markov's inequality, see Theorem~\ref{markov}, we obtain 

$$ \Prob{ Z \geq  \Mean{Z} \log n } \leq \frac{1}{\log n}=o(1),$$ 

\noindent This suggests  setting $\gamma=n^{t-3/2}\sqrt{\log n}$ since then
$|A ^{t}_{uv} -  \bar{A}^{t}_{uv} |  \leq n^{t-3/2}\sqrt{\log n}$ \whp.
The gap requirement is 

$$\frac{n^{t-1} (p-q)^{t}}{k^{t-1}}  > 4n^{t-3/2}\sqrt{\log n} \rightarrow p-q = \Omega \Big( \big( \sqrt { \frac{\log n}{n}} \big)^{1/t}\Big).$$ 

\noindent This proves our claim, as for $t=2$ we obtain the best possible gap. 

\end{proof}

\section{Experimental Results}
\label{sec:experiments}
In this Section we present our experimental findings. We refer to our method
as \egypt ({\em E}fficient {\em G}raph {\em P}ar{\em T}itioning). 
Specifically, Section~\ref{subsec:setup}  describes the experimental setup. 
Section~\ref{results} presents the simulation results which verify the value of our proposed method.
Finally, Section~\ref{application} provides an application of our method on real-data. 
 
\subsection{Datasets}
\label{subsec:setup}

The experiments were performed on a single machine, with Intel Xeon CPU
at 2.83 GHz, 6144KB cache size and and 50GB of main memory. 
We have implemented \egypt in both \textsc{Java JDK 1.6}  and \textsc{Matlab} R2011a. 
The method we use to compare against for the 1-path classification is LWD, c.f. \cite{stanton},
the single streaming graph partitioning method for which we have theoretical insights \cite{stantonstreaming}. 
The simulation results in Section~\ref{results} were obtained using the \textsc{Java} code. 
The results in Section~\ref{application} were obtained from the \textsc{Matlab} implementation.

{\em Synthetic data:} We generate random graphs according to the planted partition model. 
We fix $q=0.05$ and we range the gap from $0.05$ until $0.95$ with a step of $0.05$. This 
results in pairs of $(p,q)$ values with $p$ ranging from $0.1$ until $1$. 
and $k$ ranging from $2$ to $16$ as successive powers of $2$ are qualitatively identical. 
Notice that given that $p,q=\Theta(1)$ despite the small number of vertices
the number of edges is large ranging from $2$ to $9.2$ million edges.  

Parameter $B$ in the interval $\{50,100,200,\ldots,1000\}$. 
The imbalance tolerance $\rho$ was set to 1, demanding equally sized clusters (modulo the remainder of n divided by k).
Since the function $\Psi$ is known, i.e.,  groundtruth is available, 
we measure the precision of the algorithm, as the percentage of the ${n \choose 2}$ 
relationships that it guesses correctly. 
The success of the algorithm is also judged in terms of the fraction of edges cut $\lambda$. 

{\em Real data:} An interesting question is what kind of real-world graphs does the planted
partiton model capture? Real-world networks are not modelled by the planted partition model \cite{amazing}. 
However, nearest neighbor graphs of well-clustered data points appear to be modelled 
by the planted partition model we analyzed in Section~\ref{sec:algorithm}.
As a cloud of points, we use a perfectly balanced set of 50\,000 digits (5\,000 digits for each digit 0,1,\ldots,9) from the MNIST  
database \cite{yann}. Each digit is a 28$\times$28 matrix which is 
converted in a 1-dimensional vector with 784 coordinates. 

\begin{figure}[!ht]
\centering
\begin{tabular}{@{}c@{}@{\ }c@{}}
\includegraphics[width=0.505\textwidth]{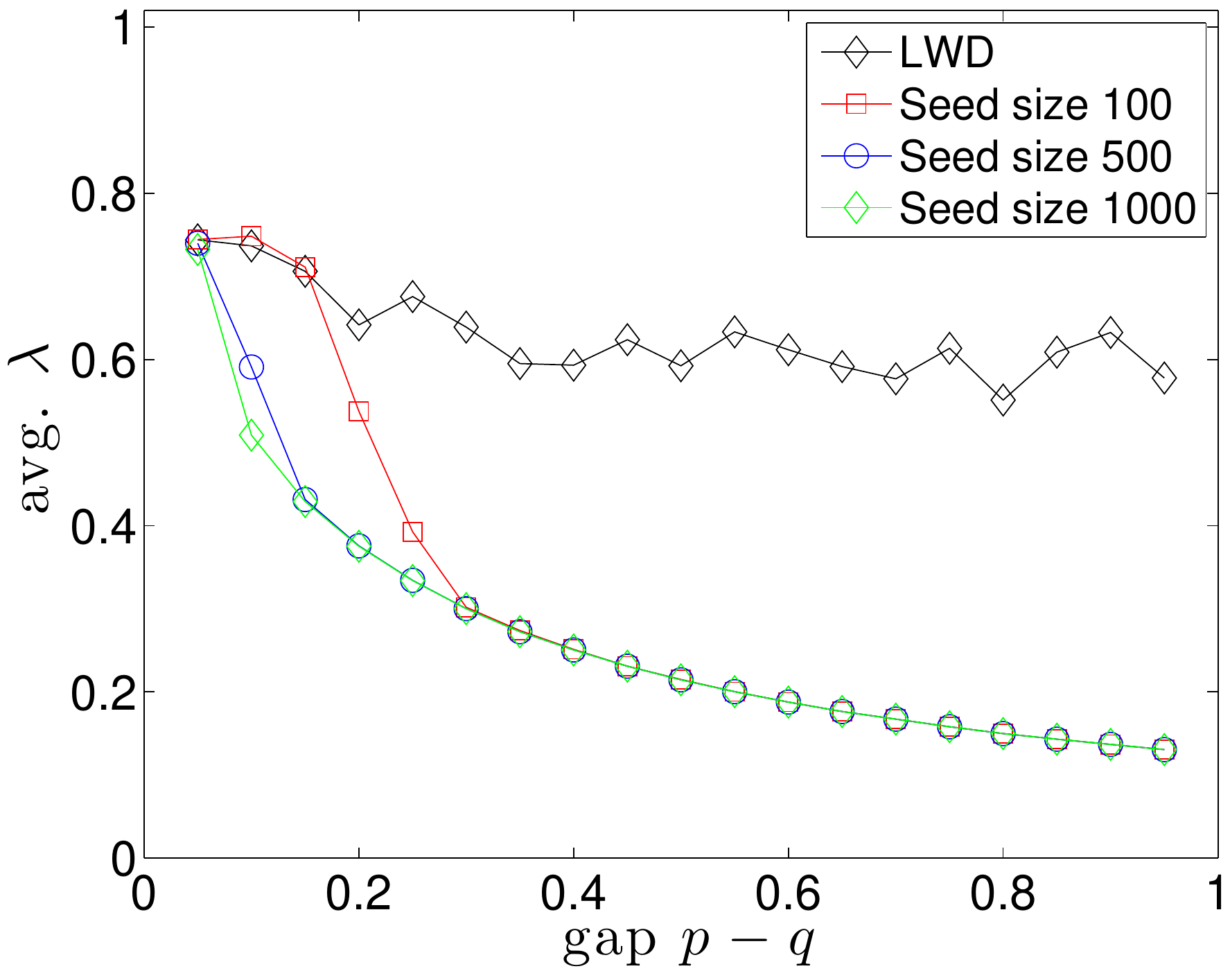} & \includegraphics[width=0.475\textwidth]{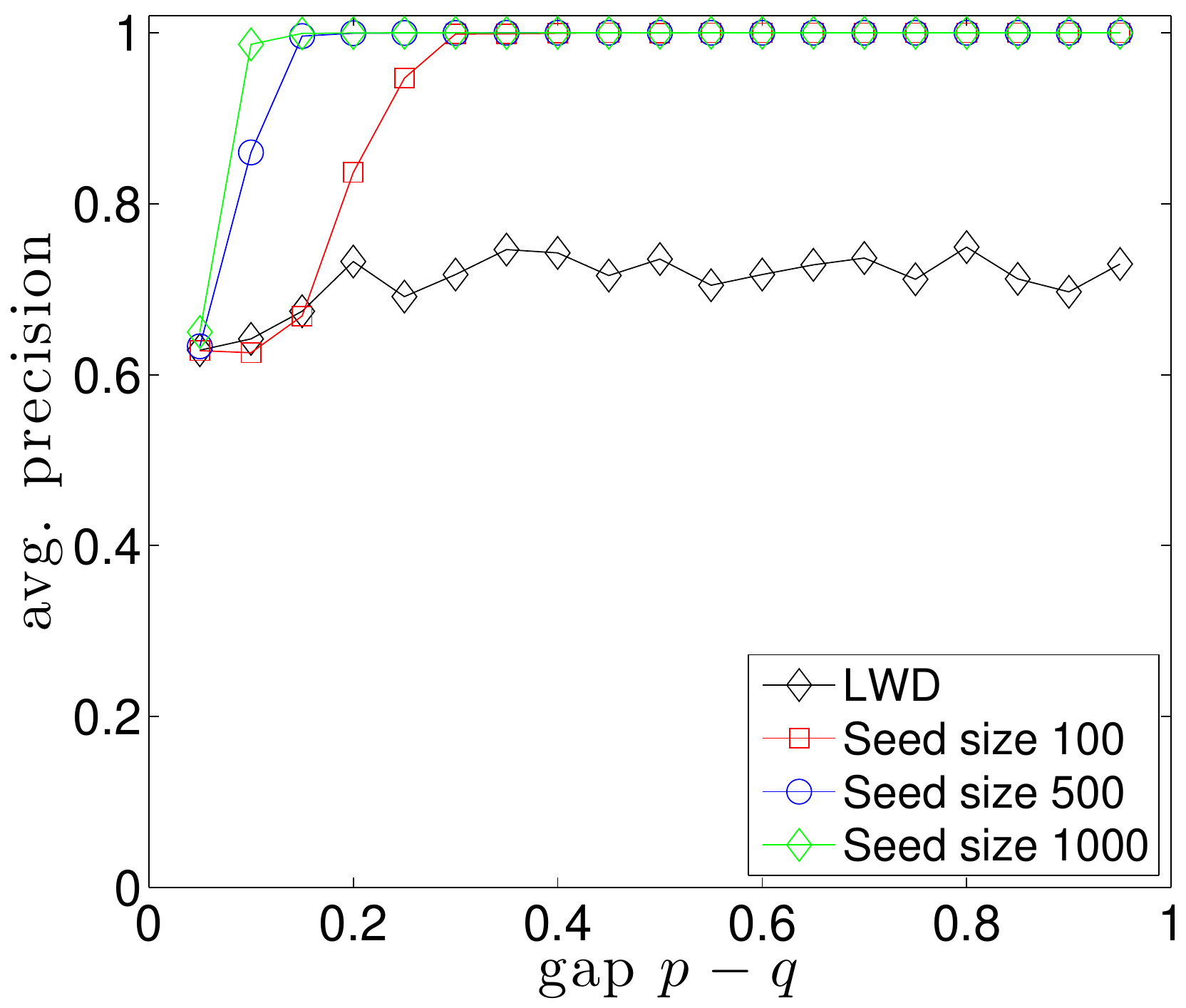} \\
(a) & (b) \\
\includegraphics[width=0.505\textwidth]{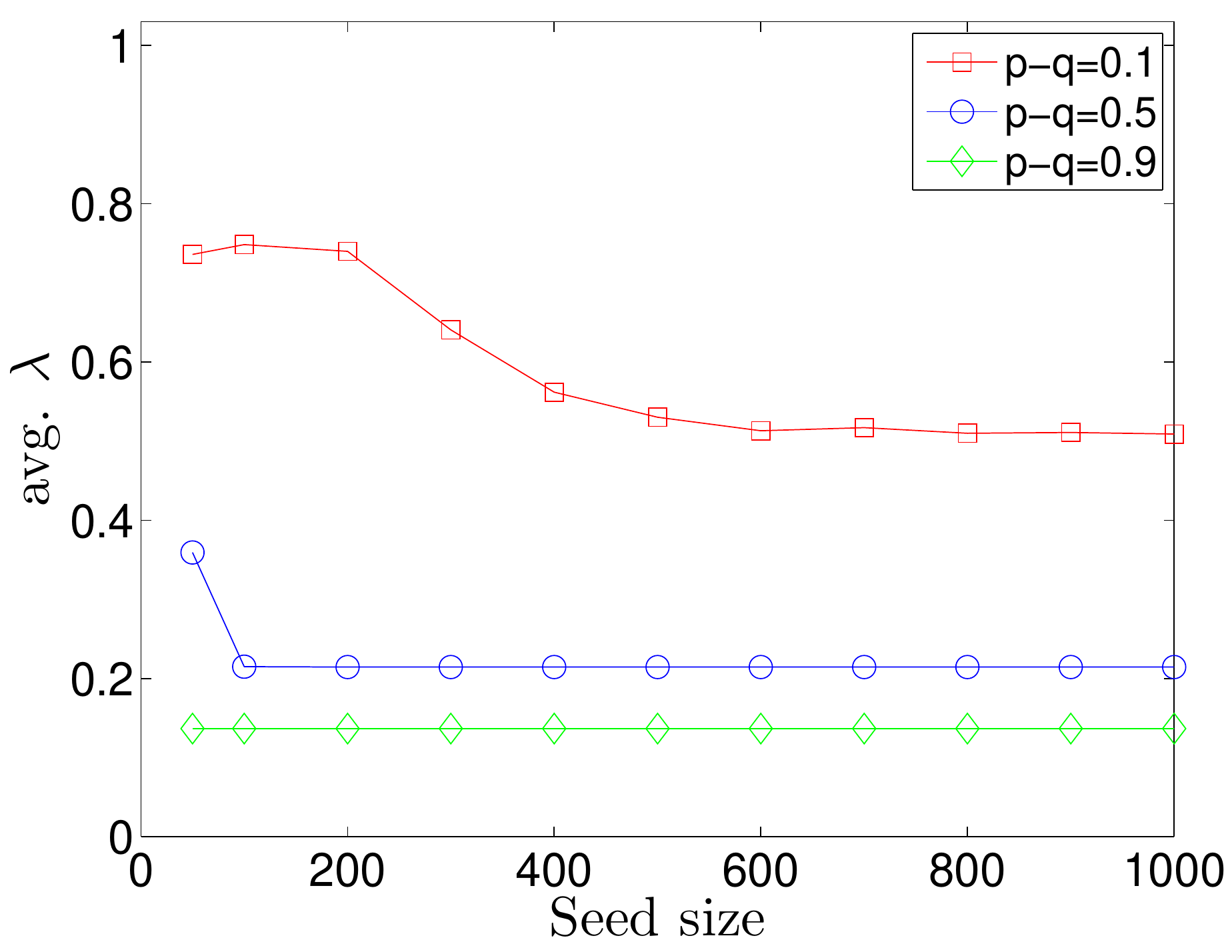} & \includegraphics[width=0.475\textwidth]{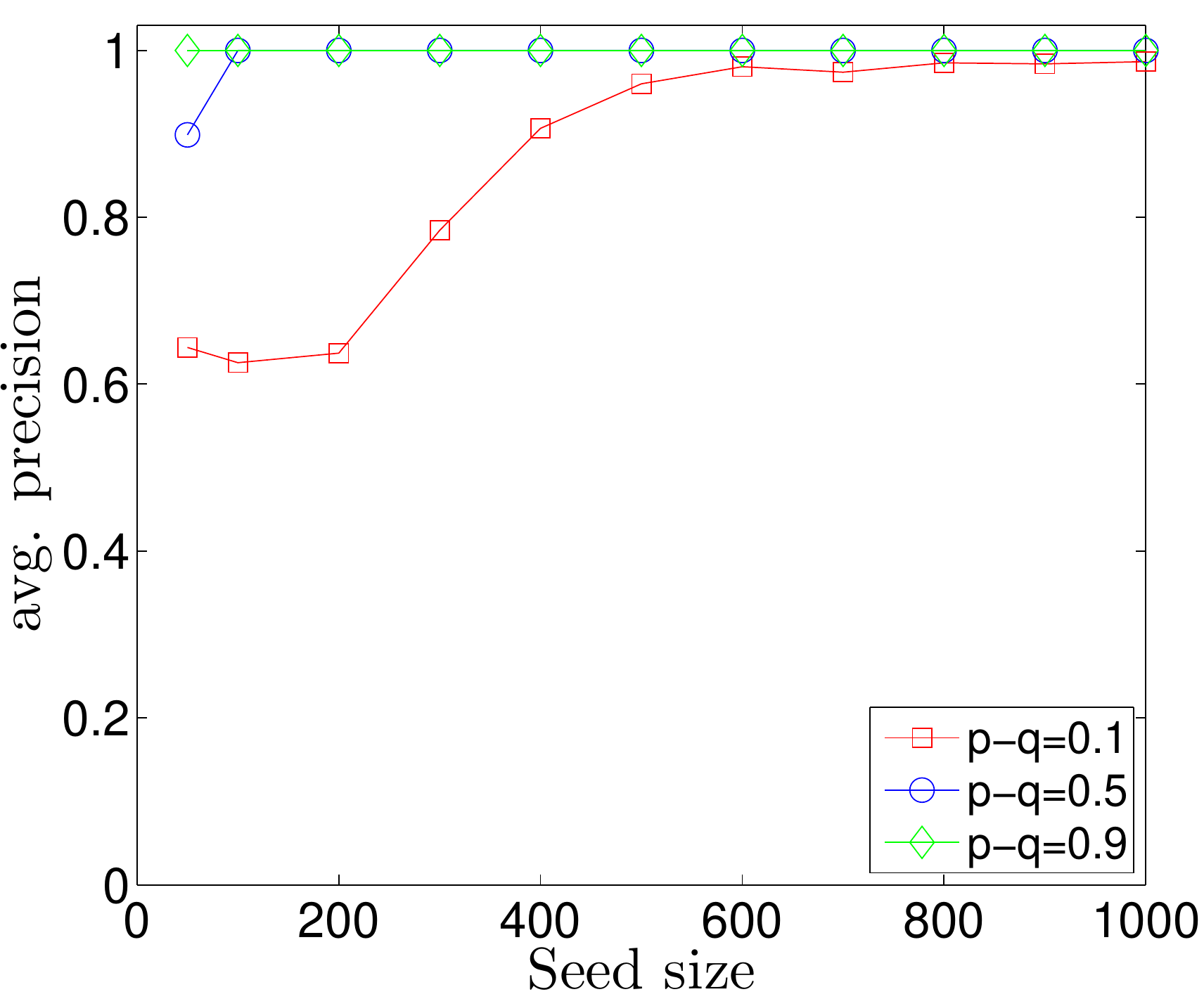}  \\
(c) & (d) 
\end{tabular}
\caption{\label{fig:fig1} (a) Average fraction
of edges cut $\lambda$ and (b)  average precision versus gap $p-q$ for LWD and \Egypt  for three different $B$ values (seed sizes).
(c) Average precision and (d) average $\lambda$, versus $B$ for three different 
gaps. All data points are averages over five experiments. Observed values are strongly concentrated around their corresponding averages.  }
\end{figure}

\subsection{Simulations}
\label{results}

Figures~\ref{fig:fig1}(a) and (b) plot   the average fraction of edges 
cut $\lambda$ and  the average precision   versus the gap $p-q$ for LWD and three runs 
of \egypt with different seed sizes, $B \in \{100,500,1000\}$. 
We observe that even for a small value of $B$, the improvement over LWD is significant. 
Furthermore, we observe that even for $p=1, q=0.05$ which corresponds to 
the 0.95 gap, LWD is not able to output a good quality partition. 
This shows that the analysis of \cite{stantonstreaming} needs even larger values 
of $n$ to recover the partition \whp. 
Furthermore, as  $B$ increases from 100 to 1000 
the quality of the final partition improves. 
It is worth emphasizing that the results we obtain are strongly concentrated around 
their corresponding averages.  The ratio of the variance over the mean squared was at most 0.0129 and typically
of the order $10^{-3}$ indicating a strong concentration according to Chebyshev's 
inequality, see Theorem~\ref{chebyshev}.
 
Finally, Figures~\ref{fig:fig1}(c) and (d)  plot the average fraction of edges cut $\lambda$ 
and the average precision versus the seed size $B$ for three different gaps, averaged over five 
experiments. Again, data points are concentrated around averages.  As expected, the smaller
the gap $p-q$ the larger the parameter $B$ has to be in order to obtain a given amount of precision. 
When the gap is large even for a small seed size $B=50$, \egypt obtains the correct partition. Notice that LWD 
cannot achieve this level of accuracy. At the same time the computational overhead is negligible, at the order of seconds.

\begin{figure}[!ht]
\centering
\begin{tabular}{c}
\includegraphics[width=0.65\textwidth]{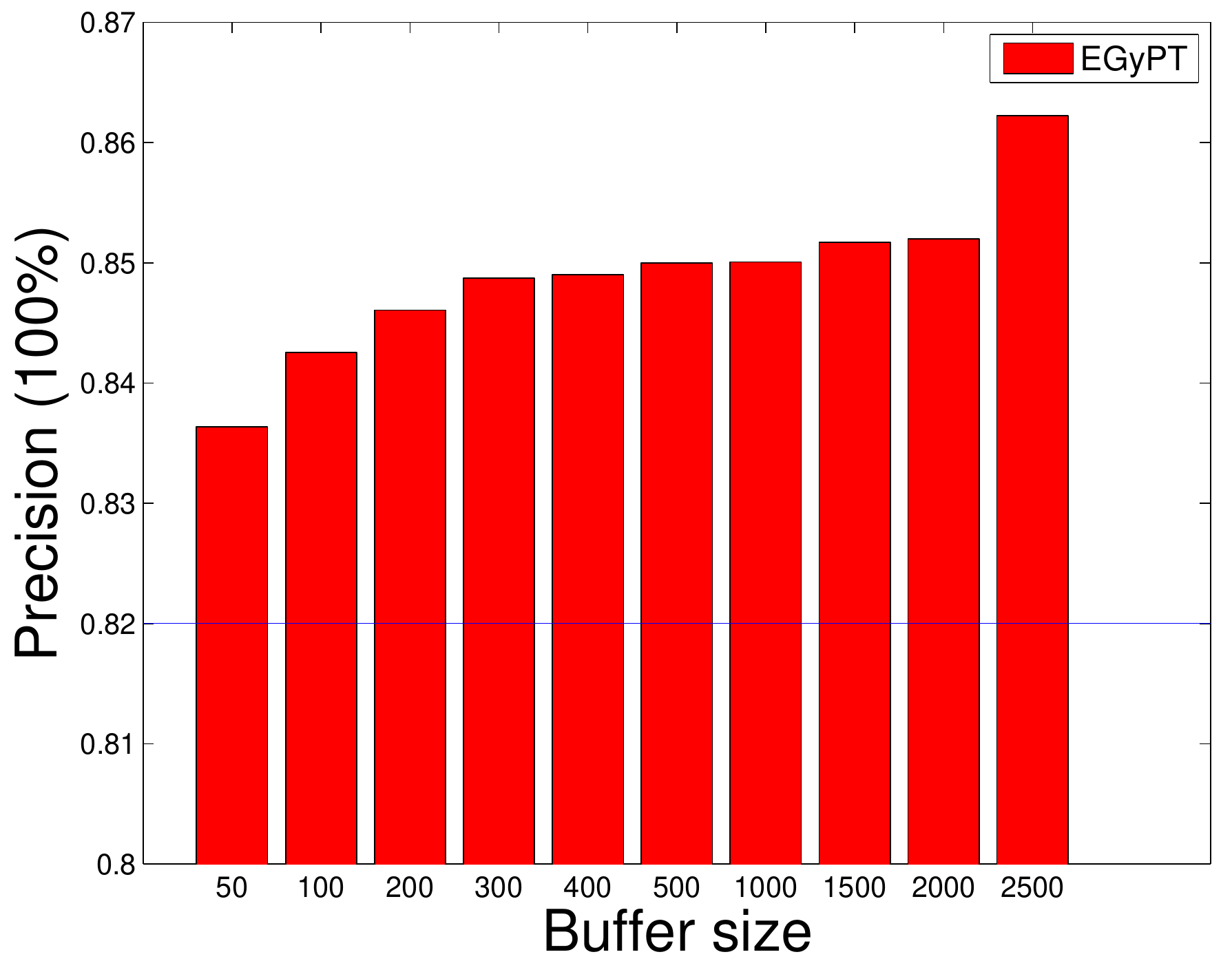}\\ 
\end{tabular}
\caption{\label{fig:fig6} Digits classification  }
\end{figure}

\subsection{A Machine Learning Application}
\label{application}

We consider a stream of data points where each data point $x \in \field{R}^{784}$ represents a digit from 0 to 9.
Whenever a new data point $x$ arrives, we find its $k'=5$ nearest neighbors among the $B$ first points. 
This is a variation of the well known $k'$-nearest neighbor graph \cite{maier2009optimal}. 
It is worth outlining that the planted partition model captures somes aspects of the resulting $k'$-nearest
neighbor graph. Specifically, the key property we are interested in is the 
community structure of the graph. 
For a given set of vertices $U \subseteq V$ the conductance of $S$ is defined $\phi(U)$ is ${ E(U,\bar{U})}/{ \text{vol}(U) } $ where 
$E(U,\bar{U}) =| \{ (u,v): u \in U, v \in \bar{U}, w(u,v) = 1\}|$ and 
$vol(U)=\sum_{v\in U} deg(v)$. The conductances of the 10 subsets of vertices  for each possible digit
are  0.0088, 0.0160, 0.0203, 0.0282, 0.0214, 0.0253, 0.0122, 0.0229, 0.0291, 0.0328 respectively.
Figure~\ref{fig:fig6} shows the improvement in the precision of the clustering as parameter $B$
increases from 0.1\% (50) to 5\% ($2\,500$) of the total number of data points ($50\,000$). The blue straight line shows the performance
of LWD. We observe the improvement over LWD even for $B=50(=0.1\% \times 50\,000)$
and the monotone increasing behavior of the precision as a function of $B$.

\section{Conclusions}
\label{sec:concl}
In this work we intoduce the natural idea of higher length walks 
for streaming graph partitioning, a recent line of research \cite{stanton}
that has already had a significant impact on various graph processing systems. 
We analyze our proposed algorithm \egypt in the planted partition model.

In future work, we plan to perform an average case  performance analysis of \egypt
using a random graph model with power law degree distribution and small separators, 
e.g., \cite{flaxman2006geometric,frieze2012certain}, and to explore the performance of \egypt on social networks.

\section{Acknowledgements} 
The author would like to thank Alan Frieze and Moez Draief for their feedback on the manuscript. 

\bibliographystyle{alpha}
\bibliography{ref}

\end{document}